\documentclass[12pt]{article}

\usepackage{float}
\usepackage{graphicx}
\graphicspath{{./}{figures/}}

\usepackage[T1]{fontenc}
\usepackage{lmodern}
\usepackage[utf8]{inputenc}

\usepackage{amsmath}
\usepackage{amssymb}
\usepackage{amsthm}
\newtheorem{theorem}{Theorem}
\newtheorem{proposition}{Proposition}
\newtheorem{lemma}{Lemma}
\newtheorem{assumption}{Assumption}
\newtheorem{corollary}{Corollary}
\newtheorem{obs}{Observation}
\theoremstyle{remark}
\newtheorem{remark}{Remark}
\theoremstyle{plain}

\usepackage{setspace}
\renewcommand{\phi}{\varphi}
\usepackage[english]{babel}
\usepackage[round,authoryear]{natbib}
\usepackage[hidelinks]{hyperref}
\renewenvironment{proof}[2]{\noindent\textbf{Proof #1 #2}\par}{\qedsymbol\normalsize\noindent\par\medskip}

\author{Constantine Sorokin\thanks{\raggedright University of Glasgow.
\href{mailto:constantine.sorokin@glasgow.ac.uk}{\mbox{constantine.sorokin@glasgow.ac.uk}}.}
\and Eyal Winter\thanks{\raggedright The Federmann Center for the Study of
Rationality, The Hebrew University of Jerusalem and Lancaster University.
\href{mailto:eyal.winter@mail.huji.ac.il}{\mbox{eyal.winter@mail.huji.ac.il}}.}
}

\title{Informational Effects in Conflict Escalation}
\date{}

\begin{document}

\maketitle

\begin{singlespace}
\begin{abstract}

We study a war of attrition in which each party chooses how far to escalate
before conceding.  If neither concedes before escalation reaches a random
disaster threshold, both receive nothing; otherwise the confrontation ends
peacefully.  When the parties are ex-ante symmetric and their benefits are
independent, improving both parties' private information about their own
stakes raises both the probability of peace and each party's ex-ante
equilibrium utility.  The result holds for every nontrivial mean-preserving
spread of posterior values---each party's expected benefit from prevailing
given its private information.  Higher stakes work in the opposite
direction: a first-order stochastic increase in posterior values induces
greater persistence, lowers the probability of peace, and can reduce both
parties' ex-ante utilities.  Starting from symmetry, a sufficiently small
improvement in only one party's information also promotes peace; its
first-order effect is exactly half that of improving both parties'
information.  At an asymmetric starting point, however, even a small
one-sided improvement can instead lower the probability of peace.

\smallskip
\noindent \textbf{Keywords.}  Escalation, War of attrition, Information
accuracy, Convex order, Brinkmanship.

\smallskip
\noindent \textbf{JEL Classification Numbers:}  C72, D74, D82, D83, F51.
\end{abstract}
\end{singlespace}

\pagebreak

\section{Introduction}\label{sec:introduction}

International conflicts rarely move directly from disagreement to their most
destructive form.  They pass through stages at which either party can still
concede, but at which continued confrontation carries a growing risk of a
disaster event that neither party controls and that both would rather avoid.
Examples include the passage from political confrontation to violence, from
a limited engagement to a wider war, from conventional operations to nuclear
use, and from a local dispute to a broader economic disruption.  This is the
structure Schelling identified: a party does not choose disaster, but how
long to remain where disaster can happen.  Brinkmanship is a ``threat that
leaves something to chance'' \citep{Schelling,N1}, and recent work on
inadvertent and multidomain escalation gives concrete content to the stages a
confrontation may pass through \citep{Posen,Acton,SchramChance,Wan}.

Parties in such a confrontation are usually uncertain about each other.  They
may also be uncertain about themselves: about the domestic political,
strategic, or economic value that prevailing would ultimately have for them.
This paper asks what happens when that second uncertainty is reduced.  Suppose
each party comes to assess its own benefit from prevailing more accurately,
with neither becoming more optimistic on average.  Does the confrontation
become more or less likely to end before disaster?

We study a continuous escalation game between two ex-ante symmetric parties.
Each privately observes an estimate of its own benefit from prevailing, drawn
independently from a common distribution, and chooses the escalation level at
which it will concede.  The party that concedes first receives a normalized
safe payoff and its opponent receives that payoff plus its privately assessed
benefit; if the disaster event occurs before either concedes, both receive
nothing.  Neither party controls when the disaster occurs, but their
concession decisions determine how long the unresolved confrontation remains
exposed to it.  We call the probability that someone concedes before disaster
the \emph{probability of peace}---literally, the probability of resolution
before the disaster event.  Its complement measures the risk of disaster
and any resulting harm to outsiders, whereas equilibrium utility also
reflects who prevails and how much prevailing is worth.  We therefore rank both objects, and show
that they move together in the symmetric information comparison.

Our main result is that better self-assessment promotes peace.  A more
accurate signal about a party's own benefit spreads the distribution of its
posterior expected benefit in convex order while leaving its mean unchanged
\citep{Blackwell,MM,GanuzaPenalva}.  When both parties' signals improve in this
sense and their benefits are independent, every such mean-preserving spread
raises the probability of peace and each party's ex-ante equilibrium utility.
The comparison is global for a broad class of bounded-support distributions
(Assumption \ref{ass:regularity}): it is not tied to a particular signal
technology or a small perturbation.  Information acquisition is exogenous
throughout: we compare information structures rather than deriving them from
an acquisition or persuasion problem.

The mechanism is separation: the earlier concessions induced by lower
posterior values outweigh the additional persistence induced by higher ones,
because the confrontation ends as soon as either party concedes.  Welfare
also benefits from sorting: conditional on peace, the higher-value party
prevails.  The welfare proof shows that this allocation gain survives the
endogenous change in the probability of peace.

This reasoning is suggestive but insufficient.  A mean-preserving spread
moves probability toward both tails, and larger benefits by themselves make
parties more persistent, so the informal story supplies forces in both
directions without determining their net effect.  The equilibrium concession
strategy is also not fixed while the distribution moves: it is determined by
the whole distribution.  The general distributional comparative-statics results in \citet{Distr}
do not determine which force dominates here.  Our proof therefore works
directly with quantile functions and recomputes equilibrium along the
information comparison.  It
also yields a broader ordering of planned exposure at the first concession.  The welfare
result requires a further allocation argument and does not follow from the
peace result alone.

To make the comparison unambiguous, we must also specify which equilibrium
is used.  Incomplete-information wars of attrition can admit asymmetric
equilibria even when the parties are ex-ante identical.  We therefore use
the standard common wild-type refinement: each party has a small probability
of never conceding, and we take the limit as that probability vanishes.
For every positive probability of a wild type the resulting equilibrium is
unique; symmetry of the
environment then makes it symmetric, so the symmetric equilibrium used in the
main comparison is selected rather than assumed.  The same refinement also
selects the asymmetric equilibria studied in the one-sided comparison.

Measuring escalation by accumulated disaster hazard makes the selected
concession strategy coincide with the bid function of the standard
private-value war of attrition with linear costs.  It also makes the
probability of peace independent of the disaster-threshold distribution
under Assumption \ref{ass:regularity}, paralleling the neutrality result
for equilibrium damage in \citet{N1}.

Increases in information accuracy should be distinguished from increases in
the stakes, which work in the opposite direction.  A first-order stochastic
increase in benefits lowers the probability of peace: parties with more to
gain are more persistent.  Their ex-ante utilities need not rise either,
because the larger value of prevailing may be outweighed by greater exposure
to disaster.  The contrast between the two comparisons is the paper's central
point.  Raising what the parties stand to gain intensifies escalation;
raising how well they understand it, while holding the mean fixed, promotes
resolution.

Does the favorable information effect require both parties to learn more?
Starting from symmetry, a sufficiently small improvement in only one party's
information also raises the probability of peace.  This comparison uses the
same equilibrium selection, allowing the lowest types to concede immediately
when the resulting asymmetry makes that optimal.  The one-sided theorem
states the conditions on the initial distribution and the fixed direction
of change under which the local result holds.  At symmetry, its first-order effect is exactly
half the effect of improving both parties' information.  The result is local
and directional: the admissible size of the change may depend on its
direction, and we make no corresponding one-sided welfare claim.
The conclusion is genuinely local: at an asymmetric starting point, even
a small one-sided improvement can instead lower the probability of peace
(Theorem \ref{onesided}(ii)).

The symmetric result has implications beyond the two parties: if disaster
also imposes a loss on outsiders, better self-assessment reduces that risk.
Thus the parties' ex-ante welfare and the external-risk effects point in the
same direction.

Although international conflict is the leading application, the same
structure arises when political leadership contests damage the organization
being contested, price
wars threaten the survival of firms, patent races remain exposed to entry, or
animal contests risk injury or predation.  The last application connects to
the biological origins of war-of-attrition models and the role of information
about asymmetries in animal conflict \citep{Smith}.

\subsection*{Related literature}

The paper builds first on theories of brinkmanship and staged escalation.
\citet{Schelling} introduced threats that leave something to chance,
and \citet{N1} showed that equilibrium damage in a continuous
brinkmanship game is neutral to the technology mapping escalation into risk.
\citet{Powell1988} studies nuclear brinkmanship with two-sided
incomplete information.  \citet{Posen}, \citet{Acton},
\citet{SchramChance}, and \citet{Wan} describe mechanisms through which a
limited confrontation may cross into a more costly stage.  We use a
stochastic disaster event to represent this transition without committing
to a particular escalation ladder.  Our question is how information about
the parties' own benefits changes the probability that concession occurs
before the transition.

Formally, the game is closely related to wars of attrition with incomplete
information \citep{FudTir,BulowKlemperer,Clara,w-o-a-all-pay}.
\citet{w-o-a-all-pay} develop the connection between the war of attrition and
the all-pay family, while \citet{BulowKlemperer} use revenue equivalence to
analyze the generalized war of attrition.  \citet{Myatt} studies asymmetric
distributions of
privately known values and the resulting immediate-exit behavior.
Equilibrium multiplicity is emphasized by \citet{CastilloMiranda}; we use
the common wild-type refinement to select an equilibrium throughout.
\citet{Gieczewski} and \citet{Georgiadis} instead introduce stochastically
evolving public states.  Equilibrium characterization becomes substantially
more delicate with asymmetric, correlated, or interdependent values
\citep{Siegel2014,RentschlerTurocy,LuParreiras,ChiMurtoValimaki}.  We retain
independent private values in order to isolate the comparative statics of the
distribution of posterior benefits; the paper makes no claim for
interdependent-value environments.

The informational question is already explicit in \citet{N1}, who asks how
improved information changes brinkmanship.  The closest endpoint comparison
is \citet{MorathMuenster2008}, who compare private values with public
revelation of all values across auction formats.  Complete information
preserves bidders' interim payoffs and lowers expected revenue in the
standard all-pay auction; in mixtures with a positive war-of-attrition
component, it raises bidders' payoffs and lowers revenue under their
equilibrium selection.  The payoff direction is consistent with Theorem
\ref{welfare}, but their information becomes public, their comparison is
between two endpoints, and the war-of-attrition limit requires an equilibrium
selection.  Here information remains private, only self-assessment becomes
more accurate, and Theorem \ref{mps} covers every symmetric convex-order
increase.

The information-acquisition papers ask a related but different question.
\citet{MorathIgnorance} study observable acquisition of one's own binary cost
before finite-horizon attrition, \citet{IA_woa} study acquisition and
verifiability of information about the state at a stochastic deadline, and
\citet{MorathMuenster2013} study learning one's own value in a first-price
all-pay contest.  More recent work instead studies acquisition of information
about an opponent's value \citep{Chen2025,FengSong2025}.  These papers
characterize acquisition, observability, effort, and payoff effects for
particular signal technologies.

Disclosure and other specific information regimes are examined by
\citet{Warneryd}, \citet{Seel}, \citet{KotowskiLi},
\citet{KovenockMorathMuenster}, and \citet{EwerhartLareida}.  Our comparison
instead ranks arbitrary exogenous, symmetric improvements in private
information about one's own benefit.

The information-ordering literature supplies the interpretation of that
comparison.  \citet{Blackwell} and \citet{MM} provide the classical
foundations, and we follow \citet{SOrders} for stochastic-order terminology.
For scalar posterior expectations, the convex order used here is called
\emph{integral precision} by \citet{GanuzaPenalva}.  Our contribution is not
another characterization of informativeness, but a global equilibrium
comparative static under this standard order.  \citet{Distr} develops general distributional
comparative statics that allow equilibrium strategies to change with the
distribution.  The sufficient monotonicity conditions used there do not
directly rank the nonlinear survival statistic here, which is why we use a problem-specific quantile argument.

Bayesian persuasion and information design instead choose an information
structure to optimize a sender's objective \citep{BPers,BergemannMorris}.
\citet{S_W_N} do so for auction
revenue, an objective linked by revenue equivalence to expected attrition
dissipation, while \citet{Szydlowski} designs information about
relative strength for a sender who wishes to sustain conflict.  Theorem
\ref{mps} orders the nonlinear probability of peace generated by an exogenous
improvement in the parties' private information about their own stakes.

A neighboring literature studies private information, mediation, and outside
resolution in conflict and bargaining.  \citet{Fearon1995} emphasizes
private information and incentives to misrepresent as a rationalist source of
war.  \citet{Horner}, \citet{Zheng}, and
\citet{Hennigs} study mediation or information provision aimed at
preventing conflict; \citet{Basak} studies the social value of public
information in bargaining.  \citet{MaManove}, \citet{FuchsSkrzypacz},
\citet{FanningDeadline}, \citet{EkmekciZhang}, and \citet{Herrera} analyze
bargaining with deadlines, imperfect control, or external resolution.
Our game has no offers, division of surplus, or strategic use of an outside
forum: at every escalation level a party can only concede or continue.

The rest of the paper is organized as follows.  Section~\ref{sec:model}
presents the model, characterizes equilibrium, and derives the probability of
peace.  Section~\ref{sec:stakes} compares changes in benefits and in
information and establishes the welfare result.  Section~\ref{sec:onesided}
studies one-sided information improvements, and Section~\ref{sec:discussion}
discusses implications and limitations.

\section{Model and equilibrium}\label{sec:model}
\subsection{Posterior values, escalation, and payoffs}
There are two ex-ante identical parties, indexed by $i=1,2$.  Party $i$'s
actual benefit from prevailing is $\widetilde\theta_i$.  Before escalation
begins, the party privately observes a signal $s_i$ and forms the posterior
value
\[
\theta_i=E[\widetilde\theta_i\mid s_i].
\]
The pairs $(\widetilde\theta_i,s_i)$ are independent across parties.  Hence
the posterior values $\theta_1$ and $\theta_2$ are independent draws from a
common distribution $F$.  We refer to $\theta_i$ as party $i$'s posterior
value, or simply as its type.

The confrontation moves along a continuous escalation scale.  A random
disaster threshold $T$, with distribution $G$, determines the escalation
level at which disaster occurs.  We call $G$ the \emph{escalation technology}:
$G(t)$ is the probability that disaster has occurred by level $t$.

The confrontation begins at level $0$ and moves up the scale until a party
concedes or level $T$ is reached, whichever comes first.  Neither party
observes $T$ before it is reached.  The distributions $F$ and $G$ and the
payoff structure are common knowledge, and the parties are risk neutral.
Because a concession ends the interaction, only the lower of the two chosen
concession levels matters.  We index escalation by a scalar level and
describe movement up the scale in temporal language such as ``sooner'' and
``longer''; nothing depends on the scale being calendar time.

No new private signals arrive during escalation.  While the confrontation
continues, each party observes the current level and the fact that neither
concession nor disaster has occurred.  Continued non-concession can change
beliefs about the opponent, but there is only one continuing public history
at each level.  A pure stopping rule can therefore be represented by a
concession level $a_i\geq0$; this representation does not require commitment
to ignore later information.  If party $i$ concedes
first and does so before $T$ is reached, it receives the normalized safe
payoff $1$, while party $j\neq i$ receives $1+\widetilde\theta_j$.  If
$T<\min\{a_1,a_2\}$, disaster occurs first and both receive zero.  Ties are
settled by a fair coin, but they occur with probability zero in the
equilibrium studied below.  We call the party that does not concede first the \emph{winner}, conditional on
peace.

Because the parties are risk neutral and
$E[\widetilde\theta_i\mid s_i]=\theta_i$, expected payoffs depend on a
party's signal only through its posterior value $\theta_i$.  Conditional on
the signal, the three payoff levels are thus $0$ under disaster, $1$ on
conceding, and $1+\theta_i$ on prevailing; only $\theta_i$ varies across
types, as in \citet{N1}.  The interaction
is therefore a private-value game in $\theta$, and an improvement in
information changes the distribution $F$ of posterior values.

To characterize behavior in this game, we impose the following regularity
and independence conditions.

\begin{assumption}[Maintained conditions]\label{ass:regularity}
\emph{(a) Posterior values.}  The distribution $F$ has bounded support
$[\underline\theta,\bar\theta]$, with
$0\leq\underline\theta<\bar\theta<\infty$, and a density $f$ that is
continuous and positive on the interior of its support.

\emph{(b) Escalation technology.}  The distribution $G$ has support $[0,\infty)$, satisfies $G(0)=0$ and
$\lim_{t\to\infty}G(t)=1$, and has a continuous positive density $g$.  The
threshold $T$ is independent of
$(\widetilde\theta_1,s_1,\widetilde\theta_2,s_2)$.
\end{assumption}

The equilibrium comparison balances the prospect of an opponent conceding
against the risk of disaster.  We therefore use survival probabilities and
hazard rates for both distributions.  Write $\mu=E[\theta_i]$ for the
common mean posterior value and
\[
S_F(x)=1-F(x),\qquad S_G(t)=1-G(t)
\]
for the two survival functions, and
\[
\lambda_F(x)=\frac{f(x)}{S_F(x)},\qquad
\lambda_G(t)=\frac{g(t)}{S_G(t)}
\]
for their hazard rates.

These probabilities also provide a common scale for comparing different
escalation technologies.  Rather than use physical levels, measure
escalation by accumulated disaster hazard, which we call the
\emph{disaster-hazard clock}: let
\begin{equation}\label{hazard-clock}
H(t)=-\log S_G(t).
\end{equation}
Thus $H(t)$ is the hazard accumulated by escalation level $t$, and surviving
to that level has probability $e^{-H(t)}$.  Assumption
\ref{ass:regularity} makes $H$ a strictly increasing map from physical
escalation levels onto $[0,\infty)$.

\subsection{The selected equilibrium}

Before comparing outcomes across information structures, we need an
equilibrium selection that applies consistently to all of them.  Let each
party be an \emph{ordinary type}, behaving as described above, with probability
$c\in(0,1)$ and a \emph{wild type} that never concedes with probability
$1-c$.  We call this the common wild-type perturbation.  The
wild-type-perturbed game has a unique equilibrium; when the
ordinary types of both parties have the same distribution $F$, symmetry of
the perturbation and uniqueness make that equilibrium symmetric and
increasing.  As $c\uparrow1$, the wild-type probability vanishes; this limit
selects the symmetric equilibrium used below.  Write
$\tau(\theta)$ for the common limiting physical concession strategy, so party $i$
chooses $a_i=\tau(\theta_i)$.

The refinement supplies the boundary condition that is absent from the
unperturbed war of attrition.  Without wild types, the highest ordinary type
escalates without bound, so the point at which the parties' concession
strategies meet at the top is not pinned down; the resulting free integration
constant permits asymmetric equilibria even in a symmetric environment.  A
positive probability of a never-conceding opponent instead gives ordinary
types a finite upper concession level.  No ordinary type wants to continue
past the opponent's highest ordinary concession, because beyond it the
opponent must be wild.  The two finite concession ranges consequently have a
common upper endpoint.  We call this the \emph{common-upper-endpoint
condition}; it pins down the integration constant and the equilibrium.

This is the standard wild-type refinement of \citet{NalebuffRiley}.  On the
disaster-hazard clock \eqref{hazard-clock}, survival to accumulated hazard
$H$ has probability $e^{-H}$.  It therefore acts as an exponential discount
factor, giving the attrition structure characterized by \citet{Clara}.  The Appendix proves the
boundary argument and takes the limit, while the Supplementary Appendix gives
the full derivation.  Unless stated otherwise, all equilibrium objects refer
to this selected limit.

To characterize the selected strategy, we examine deviations to other
concession levels.  If a party with posterior value
$\theta$ chooses the level assigned to type $x$, its expected payoff is
\begin{equation}\label{EU}
U(x,\theta;\tau)
=S_G(\tau(x))S_F(x)
+(1+\theta)\int_{\underline\theta}^x
S_G(\tau(y))\,dF(y).
\end{equation}
The first term is the payoff from conceding before both the opponent and
disaster.  The second is the payoff from outlasting an opponent of type
$y<x$, with the survival probability evaluated at that opponent's
concession level.

We call the accumulated disaster hazard a type plans to bear before
conceding its \emph{planned exposure}.  The proposition characterizes the
strategy both in physical levels and in these hazard units.

\begin{proposition}[Selected symmetric equilibrium]\label{FOC}
In the equilibrium selected by the common wild-type refinement, the limiting
common physical concession strategy is characterized by
$\tau(\underline\theta)=0$ and
\begin{equation}\label{FOC-i}
\lambda_G(\tau(x))\tau'(x)=x\lambda_F(x).
\end{equation}
Equivalently, on the disaster-hazard clock, the same strategy is represented
by the planned-exposure function
\begin{equation}\label{beta-definition}
\beta_F(x):=H(\tau(x))
=\int_{\underline\theta}^x z\lambda_F(z)\,dz,
\end{equation}
and hence
\begin{equation}\label{equilibrium-survival}
S_G(\tau(x))=\exp[-\beta_F(x)].
\end{equation}
The physical concession strategy $\tau$ is strictly increasing and is the
unique physical concession strategy selected by the refinement.
\end{proposition}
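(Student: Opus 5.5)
The plan is to work entirely on the disaster-hazard clock and reduce the problem to a standard private-value war of attrition with exponential discounting. I will write $b(x)=H(\tau(x))$, so that $S_G(\tau(y))=e^{-b(y)}$. The payoff \eqref{EU} then becomes
\[
U(x,\theta)=e^{-b(x)}S_F(x)+(1+\theta)\int_{\underline\theta}^x e^{-b(y)}f(y)\,dy .
\]
The argument has three parts: (i) in the perturbed game, derive the necessary structure of any equilibrium; (ii) pin down the integration constant through the common-upper-endpoint condition; (iii) let $c\uparrow1$ and identify the limit.

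\textbf{Step (i): structure in the perturbed game.} Fix $c\in(0,1)$. Standard war-of-attrition arguments give the following properties for each party's ordinary concession rule.
\begin{itemize}
\item It is weakly increasing. Single crossing holds because the winning payoff $1+\theta$ rises in $\theta$ while the conceding payoff does not.
\item It has no atoms at positive levels: a rival would jump just past any atom.
\item It leaves no gaps in the joint support: a type conceding right after a gap would gain by moving to its start.
\end{itemize}
Hence strategies are continuous and strictly increasing on the types that concede at positive levels. The opponent's survival in hazard units is $c\,S_F(x)+(1-c)$ when its ordinary type $x$ concedes at the same level. Differentiating the perturbed analogue of $U$ in $x$ and imposing $x=\theta$ at an interior optimum gives
\[
-b'(x)\bigl[cS_F(x)+1-c\bigr]+x\,c f(x)=0,
\qquad\text{that is}\qquad
b'(x)=\frac{x\,c f(x)}{cS_F(x)+1-c}.
\]
Under symmetric distributions, both parties have the same rule. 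Sufficiency follows from the usual argument that $\partial U/\partial x$ has the sign of $\theta-x$, which holds because $b$ is increasing.

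\textbf{Step (ii): boundary condition.} At the top, both ordinary ranges end at a common finite level $\bar b_c$. Beyond it the opponent is surely wild, so continuing only adds hazard; this gives the common-upper-endpoint condition. At the bottom, $b(\underline\theta)=0$: if the lowest type conceded at a positive level, it would gain by conceding at $0$, because over that initial stretch the symmetric opponent does not concede. The ODE has a unique solution with this initial value, and symmetry of the perturbation makes the two parties' rules coincide. This yields uniqueness, which I will invoke from the Appendix as stated.

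\textbf{Step (iii): the limit.} As $c\uparrow1$, the right-hand side of the ODE converges locally uniformly on $[\underline\theta,\bar\theta)$ to $x\lambda_F(x)$. Integrating from $b(\underline\theta)=0$ gives \eqref{beta-definition}, and \eqref{equilibrium-survival} is then simply the definition of $H$. Translating back with the chain rule, $H'(t)=\lambda_G(t)$, yields \eqref{FOC-i} and $\tau(\underline\theta)=0$. Strict monotonicity follows because $x\lambda_F(x)>0$ on the interior. Physical uniqueness follows because $H$ is a strictly increasing bijection onto $[0,\infty)$. Note that $\beta_F(x)\to\infty$ as $x\to\bar\theta$, since $\int\lambda_F=-\log S_F$ diverges there; this is consistent with the highest type escalating without bound in the limit.

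\textbf{Main obstacle.} I expect the main difficulty to be the justification in step (i) that no asymmetric or atomic equilibria exist in the perturbed game. In particular, one must rule out an atom at $0$, which the unperturbed asymmetric equilibria exploit, and then confirm that limits of perturbed equilibria converge to the ODE solution rather than to some other solution. My plan for the first point is to show that an atom at $0$ for one party would make that party's opponent's bottom types strictly prefer waiting. Under symmetry, this forces matching behavior, which contradicts the common upper endpoint unless both parties' rules are identical.
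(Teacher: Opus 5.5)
Your proposal follows the paper's proof essentially step for step: the perturbed first-order condition $b'=xcf/(1-cF)$, the boundary value $b(\underline\theta)=0$, sufficiency via the sign of $\theta-x$ after substituting the first-order condition, and the limit $c\uparrow1$, with uniqueness and symmetry of the perturbed equilibrium taken from the common-upper-endpoint lemma, as in the paper. State two points carefully. First, symmetry comes from that lemma, not from symmetry of the game alone: the matching constant $C_c=\Lambda_{2,c}(1)-\Lambda_{1,c}(1)$ is finite because $1-cs\geq1-c$, and it vanishes when the distributions coincide. Second, take the limit of $\beta_{F,c}$ by monotone convergence in $c$ rather than by local uniform convergence of the integrand, which can fail if $f$ is unbounded at $\underline\theta$.
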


The condition balances two marginal hazards.  Waiting a little longer gives
type $x$ a chance to obtain the extra benefit $x$ when the opponent concedes;
the rate of that event is $\lambda_F(x)$.  The same delay increases exposure
to disaster at rate $\lambda_G(\tau(x))\tau'(x)$.  Equation
\eqref{FOC-i} equates the marginal gain and loss.

The Appendix derives the wild-type-perturbed equilibrium, takes the wild-type limit,
and shows that this condition is sufficient, not merely necessary.  Once
\eqref{FOC-i} holds, the derivative of
\eqref{EU} with respect to the imitated type $x$ has the sign of
$\theta-x$.  Choosing the concession level assigned to one's own type is
therefore the global best response, and the second-order condition is strict
at every interior type.

The planned exposure $\beta_F(x)$ gives the probability that disaster has
not occurred by type $x$'s concession level as $e^{-\beta_F(x)}$.  The escalation technology $G$ determines only how this planned exposure is converted back into a physical
concession level through $\tau(x)=H^{-1}(\beta_F(x))$.  Thus working with
$\beta_F$ separates planned exposure from the particular escalation
technology.  The same function also coincides with the equilibrium bid
function of the standard symmetric private-value war of attrition with
linear costs.

\subsection{The probability of peace}

The strategy tells us how far each type plans to escalate.  To obtain the
probability of peace, we must average the probability of avoiding disaster
until the first planned concession.  Quantile ranks make this averaging
especially simple, because they are uniform whatever the distribution of
posterior values.  With $\nu=F^{-1}$, write
\[
\mathcal Z_\nu(u)=\beta_F(\nu(u))
=\int_0^u\frac{\nu(v)}{1-v}\,dv,
\qquad
\sigma_\nu(u)=e^{-\mathcal Z_\nu(u)}.
\]
The substitution $z=\nu(v)$ in \eqref{beta-definition} gives the integral,
since $\lambda_F(\nu(v))\nu'(v)=1/(1-v)$.  Thus $\mathcal Z_\nu(u)$ is
planned exposure at rank $u$, and $\sigma_\nu(u)$ its disaster-survival
probability.  Write $q[F]=q[\nu]$ for the selected
probability of peace, using the distribution argument for comparisons and
the quantile argument for the calculation below.

The lower-value party concedes first, so the relevant rank is the minimum
of two independent uniform ranks, which has density $2(1-u)$.  Averaging
its survival probability gives
\begin{equation}\label{peace1}
\begin{aligned}
q[\nu]&=2\int_0^1(1-u)\sigma_\nu(u)\,du\\
&=2\int_0^1(1-u)
\exp\left[-\int_0^u\frac{\nu(v)}{1-v}\,dv\right]du.
\end{aligned}
\end{equation}

The Appendix also derives this expression from type coordinates.  Notice that
$G$ has disappeared from the expression, yielding the following neutrality
result.

\begin{obs}[Neutrality of the escalation technology]\label{indie}
For a fixed distribution of posterior values, the selected equilibrium
probability of peace is independent of the escalation technology $G$.
\end{obs}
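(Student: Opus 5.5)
The plan is to derive the probability of peace from Proposition~\ref{FOC} and check that $G$ never enters. Fix $F$ and let $G$ be any escalation technology satisfying Assumption~\ref{ass:regularity}(b). By Proposition~\ref{FOC}, the selected equilibrium under $G$ has a strictly increasing physical strategy $\tau_G$, with $\tau_G(\underline\theta)=0$. On the disaster-hazard clock it satisfies $H_G(\tau_G(x))=\beta_F(x)$. The right-hand side, defined in \eqref{beta-definition}, depends only on $F$. Since $H_G$ is a strictly increasing bijection onto $[0,\infty)$, we have $\tau_G=H_G^{-1}\circ\beta_F$. Consequently \eqref{equilibrium-survival} gives $S_G(\tau_G(x))=e^{-\beta_F(x)}$ for every type $x$, whatever $G$ is.

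Next I would compute the probability of peace directly. Because $\tau_G$ is strictly increasing and $F$ is atomless, ties occur with probability zero. The first concession therefore happens at level $\tau_G(\min\{\theta_1,\theta_2\})$. Peace occurs exactly when $T$ exceeds that level, up to the null event $T=\tau_G(\cdot)$, which has probability zero since $G$ has a density. The threshold $T$ is independent of the types by Assumption~\ref{ass:regularity}(b). Conditioning on the types therefore gives
\[
q=E\bigl[S_G(\tau_G(\min\{\theta_1,\theta_2\}))\bigr]=E\bigl[e^{-\beta_F(\min\{\theta_1,\theta_2\})}\bigr].
\]
The minimum of two independent draws from $F$ has distribution $1-S_F^2$, so the last expression depends on $F$ alone. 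Changing variables to ranks, $u=F(x)$, reproduces \eqref{peace1}, which contains no reference to $G$.

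The only step needing care is to confirm that the selected equilibrium itself does not depend on $G$ in a way that escapes \eqref{beta-definition}. In particular, I must check that the wild-type limit and the common-upper-endpoint condition produce the same hazard-clock strategy for every $G$. Proposition~\ref{FOC} already asserts uniqueness and gives the boundary condition $\tau(\underline\theta)=0$ independently of $G$. I would therefore rely on it and note that the reparametrization by $H_G$ maps the perturbed game under $G$ onto the same exponential-discounting attrition game on the hazard clock. Once that is granted, the neutrality claim of Observation~\ref{indie} follows directly from \eqref{peace1}.
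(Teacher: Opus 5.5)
Your proposal is correct and follows the paper's route: Proposition~\ref{FOC} gives $S_G(\tau(x))=e^{-\beta_F(x)}$ with $\beta_F$ depending only on $F$. Averaging over the lower type $\theta_{(1)}$ (density $2S_Ff$) and changing to ranks then yields \eqref{peace1}, from which $G$ has disappeared. Your additional care about ties, the null event $T=\tau(\theta_{(1)})$, and the independence of $T$ only makes explicit what the Appendix derivation uses implicitly.
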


The absence of $G$ from the functional reflects exact behavioral adjustment.
A more imminent disaster induces every type to choose a lower physical
concession level, while a less imminent disaster induces a higher one.  In
either case, the accumulated hazard borne before concession is unchanged.
This neutrality result is the counterpart of \citet{N1}.  The remaining comparisons therefore concern the distribution of posterior
values: we distinguish increases in their level from improvements in the
information on which they are based.

Alongside $q[F]$, we compare $\overline U[F]$, either party's ex-ante
equilibrium utility when both posterior values have distribution $F$.
A distributional comparison is \emph{nontrivial} if the two quantile
functions differ on a set of positive Lebesgue measure.

\section{Stakes and information}\label{sec:stakes}

\subsection{Higher stakes}

We first change the level of the parties' posterior values rather than the
accuracy of their information.  Say that $F^1$ first-order stochastically
dominates $F^0$ if its quantile function is weakly higher at every rank.

\begin{proposition}[Higher stakes lower peace]\label{FOD}
If $F^1$ first-order stochastically dominates $F^0$, then
$q[F^1]\leq q[F^0]$, strictly under a nontrivial dominance comparison.
\end{proposition}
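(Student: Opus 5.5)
The plan is to work directly with the quantile representation \eqref{peace1} of the probability of peace, which holds for any $F$ satisfying Assumption \ref{ass:regularity}. Let $\nu_0=(F^0)^{-1}$ and $\nu_1=(F^1)^{-1}$. First-order stochastic dominance means $\nu_1(v)\geq\nu_0(v)$ for every $v\in[0,1)$. The argument is a pointwise comparison of integrands, so no equilibrium recomputation beyond Proposition \ref{FOC} is needed. The escalation technology plays no role, by Observation \ref{indie}.

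\textbf{Step 1 (exposure ordering).} For each rank $u\in[0,1)$,
\[
\mathcal Z_{\nu_1}(u)-\mathcal Z_{\nu_0}(u)=\int_0^u\frac{\nu_1(v)-\nu_0(v)}{1-v}\,dv\geq0,
\]
since the weight $1/(1-v)$ is positive and bounded on $[0,u]$. Both integrals are finite for $u<1$ because the supports are bounded. Hence planned exposure at every rank weakly rises, and
\[
\sigma_{\nu_1}(u)=e^{-\mathcal Z_{\nu_1}(u)}\leq\sigma_{\nu_0}(u)\quad\text{for all }u.
\]

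\textbf{Step 2 (integrate).} Multiply this inequality by the density $2(1-u)\geq0$ of the minimum rank and integrate over $[0,1]$. This gives $q[\nu_1]\leq q[\nu_0]$.

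\textbf{Step 3 (strictness).} This is the only step needing care. Suppose the comparison is nontrivial, so $\nu_1>\nu_0$ on a set $A$ of positive Lebesgue measure. Choose $u_0<1$ such that $A\cap[0,u_0]$ still has positive measure. Then for every $u\geq u_0$ the exposure gap satisfies
\[
\mathcal Z_{\nu_1}(u)-\mathcal Z_{\nu_0}(u)\geq\int_{A\cap[0,u_0]}\bigl(\nu_1-\nu_0\bigr)\,dv>0,
\]
so $\sigma_{\nu_1}<\sigma_{\nu_0}$ on $[u_0,1)$. This interval has positive measure and the weight $2(1-u)$ is positive there. Because both survival functions are bounded and positive, the integrated inequality is strict.

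I expect no genuine obstacle beyond two routine checks. The first is that $\sigma_\nu$ is measurable and bounded, so the integrals are well defined. The second is that using quantile rather than type coordinates is legitimate; this is exactly the substitution already justified after \eqref{peace1}. The monotone dependence of the exponent on $\nu$ does all the work.
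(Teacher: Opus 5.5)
Your proposal is correct and follows the paper's proof. The pointwise ordering of quantiles raises planned exposure $\mathcal Z_\nu(u)$ at every rank, so the integrand of \eqref{peace1} falls pointwise. Your strictness step, which fixes $u_0<1$ so that the exposure gap is bounded away from zero on $[u_0,1)$, spells out what the paper asserts in one line.
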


The intuition is direct from \eqref{peace1}.  At every quantile rank, a
higher value makes the lower-value party willing to bear more accumulated
disaster hazard before conceding.  Greater stakes therefore reduce the
probability of peace.

Higher stakes do not necessarily make the parties better off.  They raise
the benefit obtained by the winner, but they also intensify escalation.
The second force can dominate even in a familiar one-parameter comparison.

\begin{proposition}[Higher stakes can lower equilibrium utility]
\label{fod-welfare}
A first-order stochastic increase in posterior values can strictly lower
both parties' ex-ante equilibrium utility.
\end{proposition}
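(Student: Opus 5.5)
We prove the claim by an explicit example: a pair $F^0,F^1$ satisfying Assumption~\ref{ass:regularity}, with $F^1$ first-order dominating $F^0$, for which $\overline U[F^1]<\overline U[F^0]$. Because the selected equilibrium is symmetric, a drop in $\overline U$ lowers both parties' ex-ante utilities at once.

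\emph{Step 1: a formula for ex-ante utility.} We first express $\overline U[F]$ in quantile form, as was done for $q$ in \eqref{peace1}. Account for total surplus outcome by outcome. Conditional on peace, the loser receives $1$ and the winner receives $1+\theta_{\max}$, where $\theta_{\max}$ is the larger posterior value. Under disaster both parties receive zero. Peace occurs with probability $\sigma_\nu(u)$, where $u$ is the rank of the lower type. The joint density of the ranks $(u,w)$ of the lower and higher types is $2$ on $\{u<w\}$. Summing the two parties' payoffs and using symmetry,
\[
\overline U[\nu]=q[\nu]+\tfrac12\,E\bigl[\theta_{\max}\,\sigma_\nu(u)\bigr]
=q[\nu]+\int_0^1\sigma_\nu(u)\int_u^1\nu(w)\,dw\,du .
\]
As a consistency check, this should agree with averaging \eqref{EU} at $x=\theta$ over types.

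\emph{Step 2: the low-stakes benchmark.} Take $F^0$ uniform on $[0,1]$, so $\nu(v)=v$. Then
\[
\mathcal Z_\nu(u)=-\log(1-u)-u,\qquad \sigma_\nu(u)=(1-u)e^{u}.
\]
Peace probability: $q=2\int_0^1(1-u)^2e^u\,du=2(2e-5)\approx0.873$.

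Allocation term: $\int_0^1\sigma_\nu(u)\tfrac{1-u^2}{2}\,du=\tfrac12\int_0^1(1-u-u^2+u^3)e^u\,du=\tfrac12(6-2e)\approx0.282$.

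Hence $\overline U[F^0]\approx1.155$.

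\emph{Step 3: the high-stakes alternative.} Take $F^1_\varepsilon$ uniform on $[1,1+\varepsilon]$, so $\nu_\varepsilon(v)=1+\varepsilon v\ge v$ at every rank. Thus $F^1_\varepsilon$ dominates $F^0$, nontrivially, and satisfies Assumption~\ref{ass:regularity}.

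As $\varepsilon\to0$, $\mathcal Z_{\nu_\varepsilon}(u)\to-\log(1-u)$ pointwise, so $\sigma\to(1-u)$. By dominated convergence ($\sigma\le1$ and $\nu_\varepsilon\le2$):
\[
q\to2\int_0^1(1-u)^2\,du=\tfrac23,\qquad
\int_0^1\sigma\int_u^1\nu\,dw\,du\to\int_0^1(1-u)^2\,du=\tfrac13 .
\]
Hence $\overline U[F^1_\varepsilon]\to1$. This is the full-dissipation benchmark: with nearly known common stakes, escalation consumes the entire value of prevailing.

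For small $\varepsilon$, therefore, $\overline U[F^1_\varepsilon]<1.15<\overline U[F^0]$, which proves the proposition.

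\emph{Main obstacle.} The only delicate point is that the limiting point-mass distribution violates Assumption~\ref{ass:regularity}. We avoid this by working with $\varepsilon>0$ and using continuity of $\overline U$ in $\nu$ under bounded pointwise convergence. If a one-parameter family is preferred, the uniform family on $[a,a+1]$ could be used instead, but the closed forms are messier.
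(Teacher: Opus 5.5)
Your proof is correct, and your Step~1 formula is exactly the paper's quantile representation \eqref{utility-quantile}, since $q[\nu]=2\int_0^1(1-u)\sigma_\nu(u)\,du$. You reach it by total-surplus accounting, as the paper does for $W=2q+\mathcal A$ in the proof of Theorem~\ref{welfare}, rather than by averaging \eqref{EU}. Your benchmark values $4e-10$ and $3e-7$ match the paper's.

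The real difference is the dominating distribution. The paper keeps the support $[0,1]$ and takes $F^1(x)=x^2$, which gives closed forms ($\overline U$ falls to about $1.1198$) for a moderate comparison. You instead move to a nearly degenerate distribution above the old support and exploit full dissipation: any point mass yields $\overline U=1$.

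Your route buys generality. The lowest type of any regular $F$ gets exactly $1$, and by the envelope formula every higher type gets strictly more, so $\overline U[F]>1$. Hence, from any regular baseline, a uniform distribution on $[\bar\theta,\bar\theta+\varepsilon]$ is a first-order increase that lowers utility once $\varepsilon$ is small. You can even dispense with the limit. The envelope formula gives $U(\theta)\le1+\int_{\underline\theta}^{\theta}F(x)\,dx\le1+\varepsilon$ under your $F^1_\varepsilon$, so the inequality holds for every $\varepsilon<3e-8$.

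The cost is that your alternative also removes nearly all dispersion in posterior values, a change that Theorem~\ref{welfare} already associates with lower welfare. Your example therefore isolates the stakes channel less convincingly than the paper's same-support comparison.
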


A simple example establishes the claim; the calculations are in the
Appendix proof of Proposition \ref{fod-welfare}.  On $[0,1]$, let $F^0(x)=x$ and
$F^1(x)=x^2$.  The latter distribution
first-order stochastically dominates the former: low values become less
likely and the mean rises from $1/2$ to $2/3$.  Nevertheless, the probability
of peace falls from $4e-10\approx0.8731$ to
$(9-e^2)/2\approx0.8055$, and each party's ex-ante equilibrium utility falls
from $3e-7\approx1.1548$ to approximately $1.1198$.

\citet[Proposition~4]{N1} gives a useful complementary comparison.  In his
notation, a first-order shift of the
cost-benefit index $X$ toward lower costs raises both parties' expected
utilities.  Under our payoff normalization,
\[
X=\frac{\theta}{1+\theta}.
\]
Here $X$ is the loss from conceding as a fraction of the winner's payoff, so
$1-X=1/(1+\theta)$ is the conceder's payoff relative to the winner's.
Thus \citet{N1} moves the relative stakes downward, whereas our first-order
comparison moves both $\theta$ and $X$ upward.  There is no contradiction:
making concession relatively less costly can benefit both parties, while
raising the benefit from prevailing can intensify escalation enough to make
both worse off.

We next hold the mean value fixed and compare equilibrium outcomes across
information structures.

\subsection{More accurate information}\label{sec:information}

Return to the actual benefit $\widetilde\theta_i$.  A more informative signal
about that benefit makes the posterior expectation $\theta_i$ more dispersed
without changing its ex-ante mean.  For $k=0,1$, let $\theta^k$ have
distribution $F^k$.  We say that $F^1$ is a \emph{mean-preserving spread}
of $F^0$, written $\theta^0\leq_{cx}\theta^1$, if
\[
E[\varphi(\theta^1)]\geq E[\varphi(\theta^0)]
\]
for every convex function $\varphi$ for which the expectations exist.
\citet{Blackwell} and \citet{MM} provide the classical foundations for
comparing experiments and their economic value.  For scalar posterior
expectations, our mean-preserving-spread comparison is the \emph{integral
precision} criterion of \citet{GanuzaPenalva}.  A Blackwell improvement means
that the old signal can be obtained by garbling the new one; it implies
this convex-order relation between posterior values.  For a fixed prior,
however, not every mean-preserving spread is feasible as a posterior-value
distribution.  Our results cover all such spreads; Section
\ref{sec:bounds} returns to feasibility for a given prior.

To determine what this change in information does to peace, we follow the
exposure generated by the first planned concession.  Let
$\theta_{(1)}=\min\{\theta_1,\theta_2\}$ and
\[
Z_F=\beta_F(\theta_{(1)})
\]
denote the planned exposure at the lower-value party's concession level.  It
is the accumulated disaster hazard at the first planned concession level---
the hazard the confrontation would run if disaster did not intervene.
Equivalently, $Z_F=\mathcal Z_\nu(F(\theta_{(1)}))$: it is the
planned-exposure function evaluated at the random rank of the lower type.
The survival representation is
\[
q[F]=E[e^{-Z_F}].
\]

\begin{theorem}[Information accuracy and peace]\label{mps}
Suppose that $F^1$ is a mean-preserving spread of $F^0$ and both
distributions satisfy Assumption \ref{ass:regularity}(a).  Then
\[
q[F^1]\geq q[F^0].
\]
The inequality is strict for a nontrivial convex-order comparison.
\end{theorem}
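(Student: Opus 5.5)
The plan is to work entirely with the quantile representation \eqref{peace1} and to move from $\nu^0=(F^0)^{-1}$ to $\nu^1=(F^1)^{-1}$ along the linear path $\nu_t=(1-t)\nu^0+t\nu^1$, $t\in[0,1]$. I will show that $t\mapsto q[\nu_t]$ is nondecreasing, and strictly increasing when the comparison is nontrivial. The path only needs to consist of bounded, nonnegative, nondecreasing functions, so that the functional \eqref{peace1} is defined along it. It does not need to stay inside the class of Assumption \ref{ass:regularity}(a). Bounded support makes every integrand bounded, so differentiation under the integral sign is routine.

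Write $\delta=\nu^1-\nu^0$ and $\mathcal Z_t(u)=\int_0^u \nu_t(v)/(1-v)\,dv$. Differentiating \eqref{peace1} in $t$ and exchanging the order of integration by Fubini gives
\[
\frac{d}{dt}q[\nu_t]=-2\int_0^1 \delta(v)\,K_t(v)\,dv,
\qquad
K_t(v)=\frac{1}{1-v}\int_v^1 (1-u)e^{-\mathcal Z_t(u)}\,du .
\]
Next I translate the convex order into quantile terms. Equal means together with $\theta^0\leq_{cx}\theta^1$ is equivalent to
\[
\Delta(v):=\int_v^1\delta(w)\,dw\geq0 \quad\text{for all } v,
\qquad \Delta(0)=\Delta(1)=0,
\]
which is the standard upper-tail-integral characterization of the convex order via quantiles. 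Since $\delta=-\Delta'$, integrating by parts gives $\int_0^1\delta K_t\,dv=\int_0^1\Delta K_t'\,dv$. The boundary terms vanish because $\Delta(0)=\Delta(1)=0$ and $K_t$ is bounded; in fact $K_t(v)\leq (1-v)/2\to0$. It therefore suffices to show that $K_t$ is strictly decreasing.

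That monotonicity is the key step, and I expect it to be the only real obstacle. I plan to settle it as follows. Differentiating,
\[
K_t'(v)=\frac{K_t(v)}{1-v}-e^{-\mathcal Z_t(v)} .
\]
Because $\nu_t\geq0$, the function $\mathcal Z_t$ is nondecreasing, so $e^{-\mathcal Z_t(u)}\leq e^{-\mathcal Z_t(v)}$ for $u\geq v$. Hence
\[
\int_v^1(1-u)e^{-\mathcal Z_t(u)}\,du\leq e^{-\mathcal Z_t(v)}\,\frac{(1-v)^2}{2},
\]
that is, $K_t(v)\leq e^{-\mathcal Z_t(v)}(1-v)/2$. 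Substituting gives $K_t'(v)\leq -\tfrac12 e^{-\mathcal Z_t(v)}<0$ on $[0,1)$. Therefore $\frac{d}{dt}q[\nu_t]=-2\int_0^1\Delta K_t'\,dv\geq0$, and integrating over $t\in[0,1]$ yields $q[F^1]\geq q[F^0]$.

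For strictness, suppose the quantile functions differ on a set of positive measure. Then $\Delta$ is a continuous function that is not identically zero, because $\Delta\equiv0$ would force $\delta=0$ almost everywhere. So $\Delta>0$ on an open interval. Combined with $K_t'<0$ there, this makes the derivative strictly positive for every $t$, and the inequality is strict.

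A possible subtlety is the endpoint $v\to1$, where $1/(1-v)$ appears. The bound $K_t\leq(1-v)/2$ handles it, together with $\mathcal Z_t(u)\to\infty$ as $u\to1$ when $\nu_t(1)>0$ (otherwise $e^{-\mathcal Z_t}$ is simply bounded). Neither case affects the sign argument.
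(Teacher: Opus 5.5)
Your proposal is correct and follows the paper's proof essentially step for step. The paper proves the more general Proposition~\ref{mps-dcx} along the same linear quantile path and obtains Theorem~\ref{mps} by taking $\Phi(z)=e^{-z}$; your $K_t$ is its $A_\xi$, your upper-tail integral $\Delta$ is its $-\Gamma$, and your bound $K_t'(v)\leq-\tfrac12 e^{-\mathcal Z_t(v)}$ is exactly its estimate $A_\xi'(v)\leq-\tfrac12 k_\xi(v)$, giving the same strictness argument.
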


More accurate information separates posterior values.  A party that learns
that prevailing is worth little concedes sooner; a party that learns that it
is worth a great deal persists longer.  Because the confrontation ends when
either party concedes, the behavior of the lower posterior value determines
planned exposure at the first concession.

This interpretation does not require the information improvement to place
more probability near the bottom of the overall distribution.  A
mean-preserving spread may be confined to an arbitrarily high range of
posterior values, leaving every lower type unchanged.  Even then, some types
within the affected range move downward and concede earlier, while others
move upward and persist longer.  Separation therefore need not involve
the bottom of the overall distribution.

This separation intuition is incomplete as a proof.  The probability of
peace is a nonlinear survival functional evaluated at the equilibrium
concession level of the lower-value party.  A mean-preserving spread changes
both the distribution of the minimum $\theta_{(1)}$ and the planned-exposure
function $\beta_F$.  Thus the earlier concessions
and additional persistence cannot be compared while holding behavior fixed,
and no direct Jensen argument yields the result.  The theorem establishes that
the first effect dominates after the entire equilibrium response is taken
into account.

The proof handles this adjustment by connecting the two quantile functions
with a linear path and recomputing equilibrium at every point.  The
probability of peace increases along the entire path, giving a global
rather than merely local comparison.  The same argument works for more
than the survival function $e^{-z}$: it yields the following broader
ordering of planned exposure.

\begin{proposition}[Planned-exposure ordering]\label{mps-dcx}
Suppose that $F^1$ is a mean-preserving spread of $F^0$ and both
distributions satisfy Assumption \ref{ass:regularity}(a).  Then, for every
continuously differentiable decreasing convex function $\Phi$,
\begin{equation}\label{dcx-ordering}
E[\Phi(Z_{F^1})]\geq E[\Phi(Z_{F^0})].
\end{equation}
The inequality is strict if the convex-order comparison is nontrivial and
$\Phi'<0$ everywhere.
\end{proposition}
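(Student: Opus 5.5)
The plan is to work entirely in quantile coordinates and to connect $\nu_0=(F^0)^{-1}$ and $\nu_1=(F^1)^{-1}$ by the linear path used for Theorem \ref{mps}. Since the lower rank has density $2(1-u)$, as in \eqref{peace1},
\[
\Psi(\nu):=E[\Phi(Z_F)]=2\int_0^1(1-u)\,\Phi\bigl(\mathcal Z_\nu(u)\bigr)\,du,
\qquad
\mathcal Z_\nu(u)=\int_0^u\frac{\nu(v)}{1-v}\,dv .
\]
This formula only uses that $\nu$ is a nondecreasing, bounded, nonnegative quantile function. For $t\in[0,1]$ put $\nu_t=(1-t)\nu_0+t\nu_1$, which is again such a function, and let $\delta=\nu_1-\nu_0$. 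In quantile form, the mean-preserving spread hypothesis says that $D(v):=\int_0^v\delta(w)\,dw\le 0$ for all $v\in[0,1]$, with $D(0)=D(1)=0$. I will show that $t\mapsto\Psi(\nu_t)$ is nondecreasing, and strictly increasing when the comparison is nontrivial.

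\textbf{Step 1 (derivative along the path).} Since $\mathcal Z_{\nu_t}$ is affine in $t$, I differentiate under the integral:
\[
\frac{d}{dt}\Psi(\nu_t)=2\int_0^1(1-u)\,\Phi'\bigl(\mathcal Z_t(u)\bigr)\int_0^u\frac{\delta(v)}{1-v}\,dv\,du .
\]
This is justified by dominated convergence. Because $\Phi$ is convex and decreasing on $[0,\infty)$, we have $|\Phi'|\le|\Phi'(0)|$. Moreover, $\bigl|\int_0^u\delta/(1-v)\,dv\bigr|\le C\log\frac1{1-u}$, and this is integrable against $(1-u)\,du$. Applying Fubini then gives
\[
\frac{d}{dt}\Psi(\nu_t)=2\int_0^1\delta(v)K_t(v)\,dv,
\qquad
K_t(v)=\frac1{1-v}\int_v^1(1-u)\,\Phi'\bigl(\mathcal Z_t(u)\bigr)\,du .
\]

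\textbf{Step 2 (integration by parts).} Integrating by parts against $D$ gives
\[
\frac{d}{dt}\Psi(\nu_t)=-2\int_0^1 D(v)\,K_t'(v)\,dv .
\]
The boundary terms vanish because $D(0)=D(1)=0$, $K_t$ is bounded, and $K_t(v)\to0$ as $v\uparrow1$. Since $-D\ge0$, it suffices to show $K_t'\ge0$.

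\textbf{Step 3 (monotonicity of $K_t$, the key step).} Since $\int_v^1(1-u)\,du=(1-v)^2/2$, we can write
\[
K_t(v)=\tfrac{1-v}{2}\,A_t(v),
\]
where $A_t(v)$ is the $(1-u)$-weighted average of $\Phi'(\mathcal Z_t(u))$ over $u\in[v,1]$. Because $\nu_t\ge0$, the exposure $\mathcal Z_t$ is nondecreasing in $u$. Since $\Phi'$ is nondecreasing (convexity) and nonpositive, $u\mapsto\Phi'(\mathcal Z_t(u))$ is nonpositive and nondecreasing. Hence $A_t(v)\le0$ is nondecreasing in $v$, so $|A_t|$ is nonincreasing. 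Therefore $|K_t|=\tfrac{1-v}{2}|A_t|$ is a product of two nonnegative nonincreasing functions, so $K_t$ is nondecreasing.

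For strictness, if $\Phi'<0$ everywhere then $|A_t|>0$, and the factor $(1-v)$ makes $|K_t|$ strictly decreasing. Differentiating directly,
\[
K_t'(v)=-\tfrac12A_t(v)+\tfrac{1-v}{2}A_t'(v)\ \ge\ -\tfrac12A_t(v)>0 .
\]
Nontriviality means $\delta\neq0$ on a set of positive measure. Then $D$ is continuous, not identically zero, and $\le0$, so it is strictly negative on an open set. The derivative in Step 2 is therefore strictly positive for every $t$. Integrating over $t\in[0,1]$ gives \eqref{dcx-ordering}, with strict inequality in the nontrivial case.

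I expect the main obstacle to be Step 3, namely identifying the kernel $K_t$ and seeing that the $(1-v)$ normalisation works in our favour. The technical points to watch are the boundary behaviour at $u\to1$, where $\mathcal Z_t$ diverges logarithmically, and the justification for exchanging derivative and integrals. Both are handled by the bound $|\Phi'|\le|\Phi'(0)|$ and the integrability of $(1-u)\log\frac1{1-u}$.
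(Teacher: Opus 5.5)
Your proposal is correct and follows the paper's proof essentially step for step. It uses the same linear quantile path, the same domination by $|\Phi'(0)|\bar\theta\log(1/(1-u))$ to differentiate under the integral, and the same Fubini exchange (your $K_t$ is exactly $-A_\xi$ in the paper); it then integrates by parts against the integrated quantile difference and signs the kernel using that $u\mapsto\Phi'(\mathcal Z_t(u))$ is nonpositive and nondecreasing. The only difference is cosmetic: your strictness bound $K_t'(v)\geq-\tfrac12A_t(v)$ is half the $(1-u)$-weighted average of $-\Phi'(\mathcal Z_t)$ over $[v,1]$. The paper instead gets the slightly sharper $-A_\xi'(v)\geq\tfrac12k_\xi(v)$, and either bound suffices.
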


The role of Proposition \ref{mps-dcx} is to strengthen and interpret the
peace result.  The choice $\Phi(z)=e^{-z}$ gives Theorem \ref{mps}; the
choice $\Phi(z)=-z$ gives the more direct conclusion that better information
lowers expected planned exposure at the first concession.  Thus the increase in
the probability of peace is not an isolated property of the exponential
survival formula.  It is one consequence of a broader improvement in the
equilibrium distribution of exposure.  To see what changes in the confrontation, compare this exposure at the
first concession with the exposure each party individually plans to bear.

A party's \emph{individual expected planned exposure} is unchanged by better
information.  By Fubini's theorem,
\[
E[\beta_F(\theta)]=E[\theta]=\mu.
\]
What falls is \emph{expected planned exposure at the first concession},
$E[Z_F]$.  In the corresponding linear-cost war of attrition, each party
pays this elapsed hazard as its cost, so total dissipation is $2Z_F$.
Fubini's theorem gives the standard expected-dissipation identity
\[
2E[Z_F]=E[\theta_{(1)}],
\]
as derived in the Appendix proof of Proposition \ref{mps-dcx}.  This is a
benchmark from the corresponding attrition game, not the literal payoff
loss in our disaster model.
In the general case, the probability of peace depends on the full distribution of $Z_F$, which is
why its nonlinear ordering does not follow from the classical
expected-dissipation calculation
\citep{w-o-a-all-pay,BulowKlemperer,S_W_N}.

This distinction between expected planned exposure at the first concession
and its distribution also
clarifies the connection to information design.  The result of
\citet{S_W_N}, translated by revenue
equivalence to the classical two-party war of attrition, also makes full
revelation of the parties' own values optimal
\citep{w-o-a-all-pay,BulowKlemperer}.  The policy is the same, but the
objective is different: their result maximizes bidders' expected surplus,
whereas Theorem \ref{mps} maximizes the probability of peace.  Neither
result implies the other.

Theorem \ref{mps} ranks peace, not the parties' payoffs.  Information also
changes which type prevails and the payoff conditional on resolution.  The
next subsection therefore treats equilibrium welfare separately.

\subsection{Equilibrium welfare}

\begin{theorem}[Information accuracy and welfare]\label{welfare}
Under the assumptions of Theorem \ref{mps},
\[
\overline U[F^1]\geq\overline U[F^0],
\]
strictly for a nontrivial convex-order comparison.
\end{theorem}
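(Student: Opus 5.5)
The plan is to work entirely in quantile coordinates, as in the proof of Theorem \ref{mps}. First I will obtain a closed form for $\overline U$ as a functional of $\nu=F^{-1}$. Then I will connect $\nu^0=(F^0)^{-1}$ and $\nu^1=(F^1)^{-1}$ by the linear path $\nu_s=(1-s)\nu^0+s\nu^1$, recomputing the selected equilibrium at every $s$. Finally I will show that $\frac{d}{ds}\overline U[\nu_s]\ge 0$ along the whole path. Each $\nu_s$ is a quantile function of a distribution satisfying Assumption \ref{ass:regularity}(a), so Proposition \ref{FOC} applies throughout.

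\emph{Step 1: the utility functional.} By Proposition \ref{FOC}, \eqref{EU} evaluated at $x=\theta$ is the interim utility. The lowest type concedes at level $0$ and so receives $1$. The envelope theorem gives $\partial_\theta U=\int_{\underline\theta}^\theta S_G(\tau(y))\,dF(y)$. Passing to ranks, a type at rank $u$ obtains
\[
1+\int_0^u\bigl(\nu(u)-\nu(v)\bigr)\sigma_\nu(v)\,dv .
\]
Integrating over $u$ and applying Fubini then yields
\[
\overline U[\nu]=1+\int_0^1\sigma_\nu(v)\Bigl[M_\nu(v)-(1-v)\nu(v)\Bigr]dv,
\qquad M_\nu(v)=\int_v^1\nu(w)\,dw .
\]
As a check, this should agree with $q[\nu]+\tfrac12E[\theta_{(2)}e^{-Z_F}]$, and I will verify it on the uniform example of Proposition \ref{fod-welfare}, where it should return $3e-7$.

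\emph{Step 2: the convex-order input.} Write $\delta=\nu^1-\nu^0$ and $D(u)=\int_u^1\delta(w)\,dw$. The mean-preserving-spread hypothesis is equivalent to two conditions: $D(u)\ge0$ for all $u$, and $D(0)=0$. Nontriviality then means that $D>0$ on a set of positive measure.

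\emph{Step 3: the derivative along the path.} Along the path, $\partial_s\nu_s=\delta$ and $\partial_s M=D$. The survival function moves according to $\partial_s\sigma(v)=-\sigma(v)\int_0^v\delta(w)/(1-w)\,dw$. The derivative of $\overline U$ therefore has two parts. The first is a direct allocation term, $\int\sigma[D-(1-v)\delta]\,dv$. The second is an indirect exposure term, $-\int\sigma(v)\bigl(\int_0^v\frac{\delta}{1-w}dw\bigr)[M-(1-v)\nu]\,dv$.

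To handle these I will integrate by parts so that every occurrence of $\delta$ is converted into $D$, writing $\delta=-D'$ and using $D(0)=D(1)=0$. Two identities will help. The first is $\sigma'=-\sigma\nu/(1-v)$, which lets me rewrite $(1-v)\nu\sigma$ as $-(1-v)^2\sigma'$. The second is that the indirect term is exactly the one handled in the proof of Theorem \ref{mps}, with weight $M-(1-v)\nu\ge0$ in place of $2(1-v)$. The target is a representation
\[
\frac{d}{ds}\overline U[\nu_s]=\int_0^1D(u)\,K_s(u)\,du
\]
with a kernel $K_s\ge0$, and $K_s>0$ on the interior.

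\emph{Main obstacle.} I expect the main obstacle to be showing $K_s\ge0$. The indirect term enters with a negative sign, and it is weighted by the surplus $M-(1-v)\nu$, which is not monotone in $v$. My first attempt will split the kernel into two pieces:
\begin{itemize}
\item the positive allocation piece, $\sigma(u)$ plus the piece coming from $-(1-v)\delta\sigma$;
\item the exposure piece, which after the exchange of integrals takes the form $\frac{1}{1-u}\int_u^1\sigma(v)[M(v)-(1-v)\nu(v)]\,dv$ differentiated appropriately.
\end{itemize}
I will then bound the exposure piece using $\nu\le\nu(v)$ below $v$ and the monotonicity of $\sigma$. If this direct bound fails, the fallback is to use Theorem \ref{mps} itself: write $\overline U=q+\tfrac12E[\theta_{(2)}e^{-Z}]$, note that the $q$ part is already nondecreasing, and show that the sorting term $\tfrac12E[\theta_{(2)}e^{-Z}]$ is nondecreasing by the same path argument. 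For that term, monotonicity of $\nu$ gives the needed sign for the allocation gain. Strictness then follows because $K_s>0$ and $D>0$ on a set of positive measure.
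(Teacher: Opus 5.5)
Your plan is correct, and your primary route closes on its own; the ``main obstacle'' rests on a false premise. The weight $\omega_s(v)=M_s(v)-(1-v)\nu_s(v)=\int_v^1[\nu_s(w)-\nu_s(v)]\,dw$ is nonincreasing in $v$. For $v<v'$, monotonicity of $\nu_s$ gives $\omega_s(v')\le\int_{v'}^1[\nu_s(w)-\nu_s(v)]\,dw\le\omega_s(v)$. Hence $\sigma_s\omega_s$ is nonnegative and nonincreasing.

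Set $P_s(v)=(1-v)^{-1}\int_v^1\sigma_s\omega_s$. The same tail-average bound used for $A_\xi$ in the proof of Proposition~\ref{mps-dcx} then gives $P_s'\le0$. After one exchange of integrals and an integration by parts, the indirect term is $-\int_0^1DP_s'$. The boundary terms vanish because $D(0)=D(1)=0$ and $0\le P_s\le\bar\theta(1-v)$. Using $(\sigma_s(1-v))'=-(1+\nu_s)\sigma_s$, the direct term integrates to $\int_0^1D(2+\nu_s)\sigma_s$. Therefore
\[
\frac{d}{ds}\overline U[\nu_s]=\int_0^1D(v)\bigl[(2+\nu_s(v))\sigma_s(v)-P_s'(v)\bigr]\,dv .
\]
The kernel is at least $(2+\nu_s)\sigma_s>0$, so weak and strict monotonicity follow together. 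Differentiation under the integral is justified by $\omega_s\le\bar\theta(1-v)$ and $|\partial_s\mathcal Z_s(v)|\le\bar\theta\log\frac{1}{1-v}$, as in the paper.

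The paper's proof is essentially your fallback. It writes $\overline U=q+\int_0^1\sigma N$; your expression agrees with this via $1-q=\int_0^1(1-u)\nu\sigma$. It takes the $q$ part from Theorem~\ref{mps} and shows separately that $\tfrac12\mathcal A'(\xi)=\int_0^1(R_\xi'-\sigma_\xi)\Gamma\,dv\ge0$, using that $\sigma_\xi N_\xi$ is nonincreasing and $\Gamma\le0$.

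The two derivatives are in fact the same function. Your $D$ equals $-\Gamma$, and the identity $(1-v)\nu\sigma=-(1-v)^2\sigma'$ gives $P=R-(1-v)\sigma+2A$ (dropping the path index). So your kernel coincides with the paper's $\sigma-2A'-R'$.

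Your envelope (information-rent) form is self-contained: it absorbs the peace effect and never invokes Theorem~\ref{mps}. The paper's split instead isolates the allocation term, which supports its reading that the sorting gain survives the endogenous change in the probability of peace. Finally, your claim that each $\nu_s$ satisfies Assumption~\ref{ass:regularity}(a) is true but unnecessary. The functional is defined for any bounded nonnegative nondecreasing quantile function, and regularity is needed only to identify the endpoints with equilibria.
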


The welfare result combines two effects.  First, more accurate information
raises the probability of avoiding disaster.  Second, whenever resolution is
peaceful, the increasing equilibrium makes the party with the higher
posterior value the winner.  Greater dispersion improves this allocation term
even after its interaction with the endogenous survival probability is taken
into account.  The latter step requires a separate argument in the Appendix;
it does not follow from Theorem \ref{mps} alone.

\subsection{Bounds and magnitude}\label{sec:bounds}

How large can the effect on peace be?  We first bound it using only the
support and mean of posterior values, then illustrate it for a fixed
uniform prior.  These are different comparisons: a fixed prior restricts
which posterior distributions can be generated by information.  The
convex-order result gives the following bounds when only support and mean
are specified.

\begin{corollary}[Bounds on the probability of peace]\label{maxminsp}
Suppose posterior values lie in
$[\underline\theta,\bar\theta]$ and have mean $\mu$, and write
\[
\pi=\frac{\mu-\underline\theta}
{\bar\theta-\underline\theta}.
\]
Across all such distributions, the probability of peace satisfies
\[
\frac{2}{2+\mu}
\leq q[F]\leq
\bigl(1-\pi^{2+\underline\theta}\bigr)
\frac{2}{2+\underline\theta}
+\pi^{2+\underline\theta}\frac{2}{2+\bar\theta}.
\]
For distributions with atoms, $q$ denotes the continuous quantile extension
of \eqref{peace1}.
\end{corollary}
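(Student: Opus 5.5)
The plan is to get both bounds from Theorem~\ref{mps}, by placing every admissible distribution between the two extremes of the convex order on $[\underline\theta,\bar\theta]$ with mean $\mu$. Fix such an $F$. Jensen's inequality gives $\delta_\mu\leq_{cx}F$, where $\delta_\mu$ is the point mass at $\mu$. For the other side, let $D$ be the two-point distribution with mass $1-\pi$ at $\underline\theta$ and mass $\pi$ at $\bar\theta$. Then $F\leq_{cx}D$, because any convex $\varphi$ on the interval lies below its chord through the endpoints; this is the standard Edmundson--Madansky argument. Hence, once the ordering is extended to distributions with atoms, we get $q[D]\leq q[F]\leq q[\delta_\mu]$. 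The lower bound $2/(2+\mu)$ should therefore be $q[D]$'s counterpart at the other end: since Theorem~\ref{mps} says more spread means more peace, the \emph{minimum} of $q$ is at $\delta_\mu$ and the \emph{maximum} is at $D$. So the correct chain is
\[
q[\delta_\mu]\leq q[F]\leq q[D].
\]

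Both extreme values follow from the quantile formula \eqref{peace1}.

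For $\nu\equiv\mu$ we get $\mathcal Z_\nu(u)=-\mu\log(1-u)$, so $\sigma_\nu(u)=(1-u)^\mu$ and
\[
q=2\int_0^1(1-u)^{1+\mu}\,du=\frac{2}{2+\mu}.
\]

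For $D$ the quantile function is $\nu=\underline\theta$ on $[0,1-\pi)$ and $\nu=\bar\theta$ on $[1-\pi,1)$. This gives:
\begin{itemize}
\item on the first piece, $\sigma(u)=(1-u)^{\underline\theta}$;
\item on the second piece, $\sigma(u)=\pi^{\underline\theta}\bigl((1-u)/\pi\bigr)^{\bar\theta}$.
\end{itemize}
Integrating $2(1-u)\sigma(u)$ over the two pieces yields
\[
\frac{2}{2+\underline\theta}\bigl(1-\pi^{2+\underline\theta}\bigr)
\qquad\text{and}\qquad
\frac{2\pi^{2+\underline\theta}}{2+\bar\theta},
\]
respectively. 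Their sum is the stated upper bound. These evaluations are routine.

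The main obstacle is that Theorem~\ref{mps} is stated only for distributions satisfying Assumption~\ref{ass:regularity}(a), while both extremes have atoms. I would handle this by continuity. The functional \eqref{peace1} depends on the quantile function only through $\int_0^u\nu(v)/(1-v)\,dv$, with $\nu$ uniformly bounded by $\bar\theta$. So if $\nu_n\to\nu$ almost everywhere with a uniform bound, dominated convergence gives $\mathcal Z_{\nu_n}(u)\to\mathcal Z_\nu(u)$ for each $u<1$. A second application of dominated convergence, using $\sigma\le1$, then gives $q[\nu_n]\to q[\nu]$. This is also what justifies calling the extension to atoms ``continuous.''

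It remains to build regular approximants that preserve the ordering. For a regular $F$, I would proceed in three steps:
\begin{enumerate}
\item Contract it as $F_n=\text{law of }\mu+(1-1/n)(\theta-\mu)$. This is regular, satisfies $\delta_\mu\leq_{cx}F_n\leq_{cx}F$, and has support strictly inside the interval.
\item Choose regular $D_n$ with mean $\mu$ and support $[\underline\theta,\bar\theta]$ that concentrate near the endpoints, so that $\nu_{D_n}\to\nu_D$ almost everywhere.
\item Choose regular $C_n$ with mean $\mu$ that shrink to $\mu$.
\end{enumerate}
One must check that $C_n\leq_{cx}F_n\leq_{cx}D_n$ for large $n$. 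I would verify this with the integrated-survival characterization $\int_x^\infty S(y)\,dy$: $F_n$ has a strictly positive gap from the extremes' integrated survival functions away from the endpoints, and it coincides with them near the endpoints. I would then apply Theorem~\ref{mps} to these pairs and pass to the limit. This step is the delicate one: the gap near the endpoints may force $D_n$ to be tailored to $F_n$ rather than chosen universally. If so, I would accept a $D_n$ depending on $F$, which suffices because only the limit $q[D]$ matters. For $F$ with atoms, the bound follows by the same continuity, applied after approximating $F$ by regular distributions with the same mean.
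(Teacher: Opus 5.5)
Your proof is correct, but it handles the atomic extremes differently from the paper.

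\textbf{How the routes differ.} The paper never approximates. It observes that the path-derivative argument in the proof of Proposition~\ref{mps-dcx} uses only that the quantile functions are bounded, nonnegative and nondecreasing; that is what makes $\mathcal Z_\xi$ nondecreasing and $k_\xi$ decreasing. The argument therefore applies verbatim to the step quantiles of the point mass and of the two-point law, and to any atomic $F$. The Jensen/chord extremes are then identified directly as minimizer and maximizer of the extended functional, and the two evaluations are exactly yours.

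You instead keep Theorem~\ref{mps} as a black box and add an approximation layer. Your continuity lemma (dominated convergence twice under bounded a.e.\ quantile convergence) is sound and nicely justifies the word ``continuous.'' The rest of the layer is extra work: regular, convex-order-preserving approximants on both sides, and regular approximants of an atomic $F$ inside $[\underline\theta,\bar\theta]$ with the same mean. The paper's remark saves all of this. Your route's advantage is that it does not depend on the internals of that proof.

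\textbf{Points to tidy.}
\begin{enumerate}
\item Delete the first, reversed chain $q[D]\le q[F]\le q[\delta_\mu]$.
\item The tailoring worry disappears if you decouple the indices. Write $\Pi_G(x)=E_G[(\theta-x)^+]$ and fix the contraction level $m$, so that $F_m$ lives on $[a_m,b_m]\subset(\underline\theta,\bar\theta)$.
  \begin{itemize}
  \item For $x\le a_m$ or $x\ge b_m$, the inequality $\Pi_{F_m}(x)\le\Pi_{D_k}(x)$ holds automatically for every $D_k$ on $[\underline\theta,\bar\theta]$ with mean $\mu$.
  \item On $[a_m,b_m]$, the gap $\Pi_D-\Pi_{F_m}$ is bounded away from zero, while $\Pi_{D_k}\to\Pi_D$ uniformly.
  \item Note that this gap is positive on all of $(\underline\theta,\bar\theta)$, so your claim that $F_n$ ``coincides'' with the extremes near the endpoints is inaccurate.
  \end{itemize}
  Hence any fixed regular sequence $D_k\to D$ works for $k\ge k(m)$, giving $q[F_m]\le q[D]$; then let $m\to\infty$.
\item An analogous check handles $C_k$.
\item Treat the cases $\mu\in\{\underline\theta,\bar\theta\}$ separately: only the point mass is admissible there, and both bounds coincide.
\end{enumerate}
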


If every posterior value equals $\theta_*$, the probability of peace is
$2/(2+\theta_*)$.  The lower bound evaluates this formula at the mean
$\mu$; the ceiling in \eqref{lower-support-bound} below evaluates it at
the lowest possible value.  The lower bound therefore occurs when both
parties use the constant posterior value $\mu$.
The upper bound occurs when posterior values put probability $1-\pi$ on
$\underline\theta$ and probability $\pi$ on $\bar\theta$.
These extreme distributions are not smooth, but the quantile proof of
Theorem \ref{mps} applies directly to bounded nondecreasing quantile
functions and therefore covers them.  Setting $\underline\theta=0$ recovers
the simpler upper bound $1-\mu^2/[\bar\theta(2+\bar\theta)]$.

How close can peace come to certainty if high values are allowed to become
arbitrarily large?  The lowest possible value supplies a ceiling that does
not depend on the mean or upper endpoint.  Since $\nu(u)\geq\underline\theta$, equation \eqref{peace1} implies
\begin{equation}\label{lower-support-bound}
q[F]
\leq2\int_0^1(1-u)^{1+\underline\theta}\,du
=\frac{2}{2+\underline\theta}.
\end{equation}
The right-hand side is the probability of peace if every party's value equals
the lowest feasible value.  Any probability placed above that value increases
planned exposure at the affected ranks and cannot improve on the
all-lowest-value case.

The fixed-mean limiting case now has a simple interpretation.  Hold
$\underline\theta$ and $\mu$ fixed and allow the upper endpoint to grow.  Then
\[
\lim_{\bar\theta\to\infty}
\sup_{\substack{\operatorname{supp}(F)\subseteq
 [\underline\theta,\bar\theta]\\ E_F[\theta]=\mu}}
q[F]
=\frac{2}{2+\underline\theta}.
\]
If $\underline\theta=0$, the supremum therefore converges to one.  It is
approached by placing almost all probability on the lowest value and a
vanishing probability on an arbitrarily high value so as to preserve the
mean.  Unless $\mu=\underline\theta$, the limiting value is a supremum rather
than a maximum.  In the zero-lower-bound case, the intuition is immediate:
with probability approaching one, at least one party has zero value and
concedes at once.

The bounds above allow any distribution with the specified support and
mean.  To illustrate a feasible information comparison for a particular
prior, now suppose the actual benefit is uniform on
$[0,B]$.  With no information, each party's posterior value is the constant
prior mean $\mu=B/2$; with full information, its posterior value is uniform
on $[0,B]$.  Table~\ref{tab:peace} compares the resulting probabilities of
peace.

\begin{table}[H]
\centering
\begin{tabular}{ccccc}
$B$ & $\mu$ & no information & full information & relative increase \\
\hline
1 & 0.5 & 0.800 & 0.873 & 9\% \\
2 & 1 & 0.667 & 0.792 & 19\% \\
10 & 5 & 0.286 & 0.533 & 87\%\\
\hline
\end{tabular}
\caption{\label{tab:peace}The effect of information on the probability of peace.}
\end{table}

Full revelation under a uniform prior does not attain the upper bound in
Corollary \ref{maxminsp}: the probability is $0.873$ rather than $0.917$ at
$B=1$, and $0.533$ rather than $0.792$ at $B=10$.  The corollary ranges over
all distributions with the stated support and mean, including distributions
that no signal about a uniform benefit can generate.  Given a uniform prior,
full revelation is already the convex-order maximum among feasible posterior
distributions, so the full-information values in Table~\ref{tab:peace} are
the best attainable for that prior.

\section{One-sided information}\label{sec:onesided}

So far, we have improved both parties' information together.
What if only one party learns more?  Starting from symmetry, does a small
one-sided improvement still promote peace?  And does that favorable effect
persist as the parties' information becomes increasingly different?
We first explain the asymmetric equilibrium and establish the local
effect, including why it is half the symmetric effect.  We then discuss
its implications and use an example to show how the effect can reverse.

Keeping party 2's information fixed does not keep its behavior fixed:
both concession strategies adjust when party 1's information changes.
At symmetry, equal posterior values choose equal concession levels, so
the lower-value party concedes first.  With different distributions, this
ordering need not hold.  To describe the asymmetric equilibrium, we pair
the ranks in the two distributions that choose the same positive concession
level; we call this \emph{rank matching}.  We use the same common wild-type
refinement as before, which anchors this matching at the highest types.
At the bottom, some types of one party may instead concede immediately.  Thus a one-sided comparison
must account for changes in both strategies, their matching, and possible
immediate concession.

Can these adjustments overturn the favorable information effect even when
the change is small?  To answer this, we change the \emph{distribution}
of party 1's posterior value by a small amount, holding its mean fixed.
This is a change in the information structure, not a particular favorable
or unfavorable signal realization.  For this local result, we restrict
posterior values to the fixed support $[0,1]$; this is a restriction on the
environment.  Starting from
a density $f$ on this support, we consider $f_\eta=f+\eta\delta$: the fixed direction $\delta$ describes
how probability is redistributed, and $\eta$ controls the size of the change.
We choose $\delta$ so that positive $\eta$ gives a mean-preserving spread.
Smallness alone, however, does not ensure that a density remains
nonnegative near an endpoint where it vanishes.  The theorem therefore
bounds the change relative to the initial density, keeping small changes
feasible for either sign of $\eta$.  A second bound controls how quickly
$f$ can vanish near zero, where a change in information may cause some
types to concede immediately.  The upper-endpoint density may vanish,
and both tails may change.

To state the comparison, write $q[F_1,F_2]$ for the selected probability
of peace when the parties have distributions $F_1$ and $F_2$, retaining
$q[F]=q[F,F]$ in the symmetric case.  Square brackets specify distributions
(or their quantiles); parentheses and subscripts denote scalar paths.

\begin{theorem}[One-sided information: local robustness and reversal away from symmetry]
\label{onesided}
\emph{(i) Local robustness.}  Suppose the parties initially share a
posterior-value distribution $F$ on $[0,1]$, with density $f$ continuous
on $[0,1]$ and positive on $(0,1)$.  Let $\delta$ be continuous on
$[0,1]$ and suppose that, for some constants $0\leq K<\infty$,
$\alpha\geq0$, $0<c_1\leq c_2$, and $\bar x\in(0,1)$,
\[
\begin{aligned}
|\delta(x)|&\leq Kf(x) &&(0\leq x\leq1),\\
c_1x^\alpha&\leq f(x)\leq c_2x^\alpha &&(0<x\leq\bar x).
\end{aligned}
\]
Keep party 2's distribution fixed at $F$ and give party 1 the distribution
$F_\eta$ with density $f_\eta=f+\eta\delta$.  If $F_\eta$ is a nontrivial
mean-preserving spread of $F$ for every sufficiently small $\eta>0$, and
$q(\eta):=q[F_\eta,F]$, then $q$ is differentiable at zero with
$q'(0)>0$.  In particular, there is $\bar\eta>0$ such that
\[
q(\eta)>q(0)
\qquad\text{for every }\eta\in(0,\bar\eta).
\]

\emph{(ii) Reversal away from symmetry.}  There exist asymmetric starting
points at which arbitrarily small one-sided mean-preserving spreads strictly
lower the selected equilibrium probability of peace.  These spreads can
be induced by more informative private signals in Blackwell's sense.  Thus
the favorable local effect in part~(i) need not persist away from symmetry.
\end{theorem}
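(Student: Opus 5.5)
The plan is to work entirely on the disaster-hazard clock, where by Observation \ref{indie} the escalation technology drops out, and to describe the asymmetric selected equilibrium by rank matching. For distributions $F_1,F_2$ on $[0,1]$, let $u_i(b)$ be the rank of party $i$ whose planned exposure is $b$. Adapting the indifference argument behind Proposition \ref{FOC} to each party separately gives the coupled system
\[
\frac{d}{db}\bigl[-\log(1-u_j(b))\bigr]=\frac{1}{\nu_i(u_i(b))},\qquad i\neq j.
\]
Here $\nu_i=F_i^{-1}$. The system is integrated downward from the common upper endpoint supplied by the wild-type refinement, where $u_1=u_2=1$. If one party's ranks reach zero at some exposure $b_0>0$ while the other's have not, the remaining mass of that other party concedes immediately (an atom at exposure $0$), and the time origin is then shifted so that $b_0$ becomes zero. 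The probability of peace is
\[
q[F_1,F_2]=\int e^{-b}\,d\bigl[1-(1-u_1(b))(1-u_2(b))\bigr],
\]
with the atom contributing weight $e^{0}=1$. At $F_1=F_2=F$ the system collapses to $u_1=u_2$ and no atom appears, which recovers \eqref{peace1}.

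\textbf{Part (i).} The argument has three steps.

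\emph{Step 1 (differentiability).} Show that $\eta\mapsto q[F_\eta,F]$ and $\eta\mapsto q[F_\eta,F_\eta]$ are differentiable at $0$, and that the derivative of the two-argument map is linear in the pair of perturbations. The bound $|\delta|\leq Kf$ gives $F_\eta(x)-F(x)=O(\eta)F(x)$ uniformly, so the perturbed quantiles satisfy $\nu_\eta(u)=\nu(u)(1+O(\eta))$ on ranks bounded away from the top. A separate argument is needed near $u=1$, where $1/(1-v)$ blows up. There I would use the same relative bound on the upper tail, $S_{F_\eta}=S_F(1+O(\eta))$, so that $-\log S$ is perturbed by $O(\eta)$ uniformly. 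With these bounds, Gronwall's inequality applied to the matching ODE linearised around $u_1=u_2$ gives an $O(\eta)$ perturbation of the matching, differentiable in $\eta$.

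\emph{Step 2 (the bottom atom).} The only non-smooth effect is the immediate-concession atom near $u=0$. This is where the power bound $c_1x^\alpha\leq f\leq c_2x^\alpha$ is used: it makes $\nu(u)\asymp u^{1/(1+\alpha)}$. Because $1/\nu$ is integrable near zero, the mismatch in the lower endpoint of the matching is $O(\eta)$ in exposure but only $o(\eta)$ in atom mass times lost exposure, and so it does not affect the first derivative. I expect this step to be the main obstacle. The first thing I would try is to bound the atom mass by $C\eta\cdot u_0^{\gamma}$, with $u_0\to0$ as $\eta\to0$ and some $\gamma>0$ coming from $\alpha$.

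\emph{Step 3 (half the symmetric effect).} Since $q[F_1,F_2]=q[F_2,F_1]$, the two partial derivatives at a symmetric point are equal. By the linearity from Step 1, the derivative along $(\delta,\delta)$ equals twice the derivative along $(\delta,0)$, so
\[
q'(0)=\tfrac12\,\frac{d}{d\eta}q[F_\eta,F_\eta]\Big|_{\eta=0}.
\]
It then remains to show that the symmetric derivative is strictly positive, not merely nonnegative. I would take the derivative computed along the linear quantile path in the proof of Theorem \ref{mps}, evaluate it at the endpoint, and write it as an integral of a strictly decreasing kernel against the quantile difference $\nu_\eta-\nu$. That difference integrates to zero and has nonnegative partial integrals from the top, by the convex order. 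The derivative $\frac{d}{d\eta}\nu_\eta\big|_{\eta=0}$ is nonzero on a set of positive measure because $\delta\not\equiv0$, which yields strict positivity. Positivity of $q'(0)$ then gives $\bar\eta$ immediately.

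\textbf{Part (ii).} I would give an explicit example in which the matching ODE can be solved in closed form. Take party 2 with $F_2$ uniform on $[0,1]$, and party 1 uninformed, with posterior concentrated at $\mu=1/2$, smoothed slightly if regularity is needed. At this starting point party 2's low types concede immediately while all of party 1 persists. Then give party 1 a small binary signal, a Blackwell improvement that produces posteriors $1/2\pm\epsilon$. The newly created low types of party 1 shift the matching so that some of party 2's formerly immediately conceding types now stay in and bear positive exposure. I would compute $q$ as a function of $\epsilon$ and verify that its derivative at $\epsilon=0^+$ is negative, checking the sign numerically if the closed form is messy. If this configuration gives the wrong sign, I would instead vary $\mu$ or use a two-point $F_2$ until the loss from the reduced atom dominates.
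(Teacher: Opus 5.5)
\textbf{Part (ii) has a genuine gap.} You give no argument, only a computation to be attempted, with a search over $\mu$ and $F_2$ if the sign comes out wrong. The configuration you propose also fails. Suppose party 1's posterior sits at $1/2$, or on $1/2\pm\epsilon$, and party 2 is uniform. Then party 1's top type lies below party 2's, so near $s=1$ the integrand defining $C_c$ in Lemma \ref{wild-type-uniqueness} behaves like $-(1/2-\epsilon)/[(1/2+\epsilon)(1-cs)]$, and $C_c\to-\infty$ as $c\uparrow1$. By \eqref{wild-type-matching}, $v_c(u)\to0$ for every $u<1$. Party 1's planned exposure, whose rank derivative is $cQ_2(v_c(u))/(1-cu)$, collapses to zero, and party 2's zero-level atom shrinks away. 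So in the selected limit party 1 concedes essentially at once and $q=1$, both before and after your binary spread; the derivative at $\epsilon=0^+$ is zero, not negative. Your heuristic (party 2's low types concede while all of party 1 persists) has the roles reversed. Smoothing that keeps party 1's support below $1$ changes nothing. Once the support reaches $1$ with $\kappa$ in \eqref{kappa-condition} finite, you still have no handle on the sign.

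The missing idea is the paper's global-to-local step. Against a uniform opponent, the concentrated full-support distribution $F_{20}$ yields $q>0.94$, from the explicit matching $\log[v/(1-v)]=\Lambda_{Q_{20}}(u)+\kappa_{20}$ with $\kappa_{20}\le\log2-10$. Full revelation yields $4e-10<0.88$. The paper then checks that $q_t$ along the garbling path \eqref{onesided-family} is continuous on $[0,1]$ and differentiable on $(0,1)$. The mean value theorem gives an interior, hence asymmetric, $t_*$ with $dq_t/dt>0$, so a small Blackwell improvement (decreasing $t$) lowers peace. No local sign computation is needed.

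\textbf{Part (i) takes a genuinely different route, and its core logic is sound.} The paper proves differentiability only along the one-sided path and computes the first variation explicitly as $q'(0)=\int_0^1\mathcal K_F M\,dx$ with a positive kernel; the half identity comes afterwards. You instead obtain the half identity from the exchange symmetry $q[F_1,F_2]=q[F_2,F_1]$, and positivity from the kernel in Theorem \ref{mps}. With $\dot\nu=-D(\nu)/f(\nu)$ one has $\int_0^v\dot\nu=-M(\nu(v))\le0$, and \eqref{A-prime-bound} gives $\frac{d}{d\eta}q[F_\eta]\big|_{\eta=0}\ge\int_0^1 fe^{-\beta_F}M\,dx>0$, exactly twice \eqref{onesided-derivative-bound}. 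This explains the factor $1/2$ conceptually and skips the long integration-by-parts computation.

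The price is that everything rests on total differentiability of $(\eta_1,\eta_2)\mapsto q[F_{\eta_1},F_{\eta_2}]$ at $(0,0)$. Existence of the two partials is not enough. In practice you need continuous partials at nearby asymmetric base points, with the immediate-concession boundary moving in both parameters, which is more than the paper proves.

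Your Step 1 sketch also misstates the selection. In the $c\uparrow1$ limit, every constant $C$ in $\Lambda_{Q_2}(v)=\Lambda_{Q_1}(u)+C$ gives matched ranks tending to $1$ as $b\to\infty$, so imposing $u_1=u_2=1$ at the top selects nothing. The selected constant is $\kappa$ in \eqref{kappa-condition}, the limit of $C_c$, and it varies with $\eta$. Gronwall over an infinite exposure horizon cannot control this; the separable first integral can. Finally, $1/\nu$ is integrable at zero only when $\alpha>0$; when $\alpha=0$ no atom arises.
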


A one-sided information improvement breaks symmetry, so the symmetric
result cannot be applied directly.  Both parties adjust their concession
strategies, and some of the lowest types may begin conceding immediately.
The theorem's bounds allow the proof to account for these adjustments:
the probability of peace is differentiable along the information change,
and any newly emerging mass of immediate concessions is too small to
contribute a first-order boundary term.  The calculation then gives a
precise link to the symmetric result: starting from symmetry, the
one-sided improvement retains exactly half the positive first-order gain
from improving both parties' information.  The following corollary states
this link, with both parties' strategic responses included.

\begin{corollary}[One-sided effects are half of two-sided effects]
\label{onesided-half}
Under the conditions of part~(i) of Theorem \ref{onesided},
$q[F_\eta]=q[F_\eta,F_\eta]$ changes both parties' distributions, whereas
$q(\eta)=q[F_\eta,F]$ changes only party 1's.  Their derivatives satisfy
\begin{equation}\label{onesided-half-identity}
q'(0)=\frac12\left.\frac{d}{d\eta}q[F_\eta]\right|_{\eta=0}.
\end{equation}
\end{corollary}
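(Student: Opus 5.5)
The identity should follow from exchange symmetry of the two-party functional combined with a chain rule at the symmetric point. Define the two-parameter family
\[
Q(\eta_1,\eta_2):=q[F_{\eta_1},F_{\eta_2}],
\]
so that $q(\eta)=Q(\eta,0)$ and $q[F_\eta]=Q(\eta,\eta)$. The selected equilibrium is defined by a refinement that treats the parties identically, and $q$ is the probability that the first concession precedes disaster. Relabelling the parties therefore gives $Q(\eta_1,\eta_2)=Q(\eta_2,\eta_1)$. Consequently, whenever the partial derivatives at the origin exist, $\partial_2Q(0,0)=\partial_1Q(0,0)=q'(0)$, and the latter exists by Theorem~\ref{onesided}(i). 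If $Q$ is (Fr\'echet) differentiable at $(0,0)$, the chain rule along the diagonal yields
\[
\left.\frac{d}{d\eta}Q(\eta,\eta)\right|_{\eta=0}=\partial_1Q(0,0)+\partial_2Q(0,0)=2q'(0),
\]
which is \eqref{onesided-half-identity}.

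The substantive step is joint differentiability of $Q$ at the origin, since existence of the two partials alone does not suffice. I would rerun the argument behind Theorem~\ref{onesided}(i) with both densities perturbed, $f_{\eta_1}=f+\eta_1\delta$ and $f_{\eta_2}=f+\eta_2\delta$. The hypotheses $|\delta|\leq Kf$ and $c_1x^\alpha\leq f\leq c_2x^\alpha$ hold for both perturbed densities uniformly in small $(\eta_1,\eta_2)$. The plan has three parts.
\begin{itemize}
\item Write the rank-matching function and the two planned-exposure functions on the hazard clock as solutions of the coupled first-order conditions, anchored by the common-upper-endpoint condition.
\item Show that these objects depend on $(\eta_1,\eta_2)$ with a first-order expansion that is linear in $(\eta_1,\eta_2)$, with a remainder $o(|\eta_1|+|\eta_2|)$ uniform in rank.
\item Show that any mass of immediate concession at the bottom, whichever party it falls on, has size $o(|\eta_1|+|\eta_2|)$. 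This is the same estimate used for the one-sided boundary term, now applied with both parameters moving.
\end{itemize}
Together these give $Q(\eta_1,\eta_2)=Q(0,0)+L_1\eta_1+L_2\eta_2+o(|\eta_1|+|\eta_2|)$, which is the required differentiability.

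As an independent check that does not rely on the chain rule, I would compute both derivatives explicitly. Differentiating \eqref{peace1} along $\nu_\eta=F_\eta^{-1}$, with $\dot\nu(u)=-D(\nu(u))/f(\nu(u))$ where $D(x)=\int_0^x\delta$, gives
\[
\left.\frac{d}{d\eta}q[F_\eta]\right|_{\eta=0}=-2\int_0^1(1-u)\,\sigma_\nu(u)\int_0^u\frac{\dot\nu(v)}{1-v}\,dv\,du.
\]
This should match twice the explicit expression for $q'(0)$ produced in the proof of Theorem~\ref{onesided}(i). In the linearized asymmetric formula, the perturbation of party~1's quantile enters the survival integrand exactly once per ordered pair of ranks, whereas in the symmetric formula it enters through both ranks. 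Splitting the double integral over the region where party~1 concedes first and the region where party~2 does, and then using symmetry, should exhibit the factor $\tfrac12$.

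I expect the main obstacle to be the uniform control of the joint remainder near rank zero. There the matching function can be singular, at rate governed by $\alpha$, and immediate concession can switch from one party to the other as $(\eta_1,\eta_2)$ crosses the diagonal. I would first try to bound the matching perturbation in a weighted sup-norm with weight $x^{\alpha+1}$. This would show that the boundary mass is $O\bigl((|\eta_1|+|\eta_2|)^{1+\epsilon}\bigr)$, as in part~(i).
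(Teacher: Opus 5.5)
Your main route, exchange symmetry $Q(\eta_1,\eta_2)=Q(\eta_2,\eta_1)$ combined with the chain rule along the diagonal, is logically valid. It is not the paper's route, however, and its decisive step is left as a program rather than proved. Joint differentiability of $Q$ at the origin is a genuinely new two-parameter estimate. The proof of part~(i) differentiates only in $\eta_1$, with party~2 held at $F$. What it establishes is absolute continuity of $\eta\mapsto 1-q(\eta)$, with an almost-everywhere derivative that is continuous at zero; it does not give a linear expansion with a rank-uniform remainder. To get differentiability of $Q$ you would also need control at pairs $(F_{\eta_1},F_{\eta_2})$ with $\eta_2\neq0$, for instance continuity of $\partial_1Q$ in a neighbourhood of the origin. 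At such pairs the rank matching is off-diagonal and the immediate-concession interval can switch between parties. This is more work than the corollary itself.

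The paper avoids all of this by carrying out what you call the independent check. Along the diagonal the functional is the explicit symmetric formula \eqref{peace1}, so no matching or boundary analysis arises. Exchange the order of integration in your displayed derivative and substitute $x=\nu(v)$. Then $dv=f(x)\,dx$, $1-v=S(x)$, and $\int_v^1(1-u)\sigma_\nu(u)\,du=\Theta_F(x)=\int_x^1e^{-\beta_F(t)}S(t)f(t)\,dt$, which gives
\[
\left.\frac{d}{d\eta}q[F_\eta]\right|_{\eta=0}=2\int_0^1\frac{D(x)\Theta_F(x)}{S(x)}\,dx.
\]
This is exactly twice the closed form \eqref{first-variation-final} already obtained for $q'(0)$ in part~(i). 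The only analytic input is domination for differentiating under the integral, supplied by $\int_0^1|\partial_\eta\nu_\eta(u)|/(1-u)\,du=\int_0^1|D(x)|/S_\eta(x)\,dx\leq C$. Your heuristic of splitting by which party concedes first is therefore unnecessary. All the asymmetric work, including immediate concession at the bottom, is already absorbed into the formula for $q'(0)$.

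What your symmetry argument would buy is a conceptual reason for the exact factor one half: each party's perturbation contributes the same partial derivative. That reasoning applies to any exchange-symmetric, jointly differentiable functional. Here, though, the direct comparison is both complete and far cheaper.
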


Two implications deserve attention: the effect of a loss of information,
and the unresolved comparison for changes that are no longer small.

\begin{remark}[Less information near symmetry]\label{rem:onesided-reverse}
Starting from symmetry, a sufficiently small loss of information by one
party lowers the probability of peace.  Indeed, the proof gives the same
positive derivative $q'(0)$ from both sides, so
$q(\eta)<q(0)=q[F]$ for small $\eta<0$, when $F_\eta$ is a mean-preserving
contraction of $F$.  In either direction, how small the change must be
can depend on $\delta$.  The proof also bounds the magnitude of the
first-order response from below in \eqref{onesided-derivative-bound}.
\end{remark}

\begin{remark}[Can a later decline erase the initial gain?]\label{rem:symmetry-benchmark}
Suppose we start from symmetry and improve only one party's information.
Peace initially becomes more likely.  Part~(ii) shows that an improvement
at an asymmetric pair can have the opposite effect, but does not show that
a later decline can outweigh the initial gain.  It remains open whether
\[
q[F^+,F]\geq q[F]
\]
for every admissible mean-preserving spread $F^+$ of $F$, however large.
Our numerical searches have found no counterexample, but we have no proof.
Thus symmetry might minimize the probability of peace \emph{among one-sided
improvements from a given symmetric baseline}; the theorem does not
establish this.
\end{remark}

To illustrate part~(ii), suppose both parties' actual benefits are uniform
on $[0,1]$ and initially fully revealed to their respective holders.
This symmetric starting point leaves no room for further information
improvement for that prior, so we construct a path of \emph{information
loss}.  Keep party 2 fully informed and give party 1 the posterior-value
distribution
\begin{equation}\label{onesided-family}
F_t(x)=(1-t)x+t\frac{x^{20}}{x^{20}+(1-x)^{20}},
\qquad x\in[0,1],\quad 0\leq t\leq1.
\end{equation}
Write $q_t=q[F_t,F_0]$.  At $t=0$ both parties are uniform; at $t=1$,
party 1's posterior values are concentrated around $1/2$, while party 2
remains uniform.  Increasing $t$ is a mean-preserving contraction and can
be implemented by garbling party 1's information.  Information improvements
therefore run in the opposite direction, toward $t=0$.

\begin{figure}[H]
\centering
\includegraphics[width=.94\textwidth]{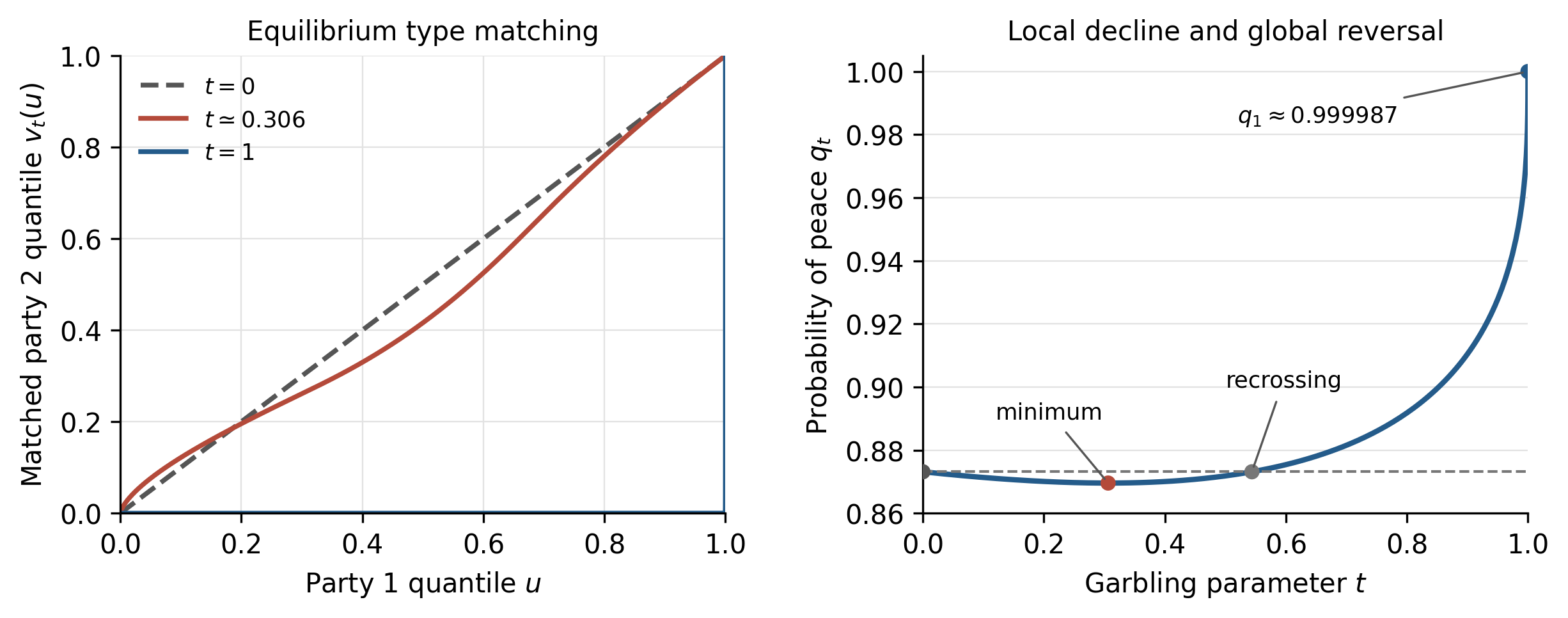}
\caption{Rank matching (left) and peace (right) along
\eqref{onesided-family}.  The rank $v_t(u)$ of party 2 concedes at the
same level as party-1 rank $u$.  Increasing $t$ garbles party 1's
information: peace first falls, then rises.  Information improvements
run from right to left.}
\label{fig:onesided-garbling}
\end{figure}

The right panel of Figure~\ref{fig:onesided-garbling} shows that a little
garbling lowers peace, as Remark~\ref{rem:onesided-reverse} predicts;
further garbling eventually raises it.  Reading the rising portion from right to left gives the adverse
local effect in part~(ii): even a small information improvement can lower
peace away from symmetry.  The Appendix proves the reversal analytically;
the figure illustrates the intervening non-monotonicity.  This does not
settle Remark
\ref{rem:symmetry-benchmark}: moving away from symmetry here means losing,
not gaining, information.

The left panel of Figure~\ref{fig:onesided-garbling} explains the reversal.  At symmetry, equal ranks concede
at equal levels: $v_0(u)=u$.  As garbling increases, the matching changes
until most party-1 ranks are paired with very low ranks of party 2.
Party 1 then concedes at very low hazard levels against most opponent
types.  This early concession accounts for $q_1\approx0.999987$: peace is
nearly certain because confrontation ends quickly, not because prolonged
escalation has become safe.  Only party 1's information deteriorates;
if \emph{both} parties instead had constant posterior value $1/2$, the
symmetric no-information benchmark would be $0.8$ (Corollary
\ref{maxminsp}).  The Supplementary Appendix provides the numerical hazard
levels and checks the incentives for rapid concession.

The uniform starting point in this example satisfies the theorem's two
bounds: $f=1$ and the garbling direction $\delta(x)=F_1'(x)-1$ is
bounded, while the lower-endpoint bound holds with $\alpha=0$.
Reversing the direction gives the local information improvement in
part~(i).  More generally, the theorem permits changes at both ends and a
vanishing upper-endpoint density; it requires neither differentiability
of $f$ or $\delta$ nor a monotone hazard rate or unimodality.
The Appendix derives the asymmetric equilibrium and the first-order
identity, including the positive-kernel formula \eqref{onesided-kernel};
the Supplementary Appendix expands the endpoint arguments.

The one-sided comparison therefore extends the favorable effect locally
at symmetry, but not to every asymmetric starting point.

We make no one-sided welfare claim: information also changes who prevails
and the value of prevailing, so the change in peace alone cannot rank
either party's utility.

\section{Discussion}\label{sec:discussion}

Three things could in principle be changed to make a confrontation more
likely to end before disaster: the escalation technology, the stakes, and how
accurately each party assesses its own benefit from prevailing.  These
three changes have different effects.  The first is neutral, by a logic already known
from earlier work.
The second moves the probability of peace in the expected direction, but
carries no general welfare ranking.  Only the third, in the symmetric comparison, raises the
probability of peace and both parties' ex-ante equilibrium utilities at once.

The neutrality logic has a close precedent.  \citet{N1} showed
that equilibrium damage in a continuous brinkmanship game does not depend on
the technology mapping escalation into risk, and Observation \ref{indie}
records the counterpart for the probability of peace.  The reason is exact
adjustment.  A change in $G$ moves every type's
physical concession level, but it leaves unchanged the accumulated disaster
hazard that type is willing to bear.  Hotlines, wider buffer zones, better
early warning, and tighter control over launch authority may all be valuable
through channels outside the model, including by changing the consequences of
disaster.  What they cannot be presumed to do, insofar as they operate only
through $G$, is raise the probability of peace.

The stakes work in the direction one would expect.  Parties with less to gain
from prevailing are less persistent, so a first-order stochastic decrease in
posterior values raises the probability of peace (Proposition \ref{FOD}).
Measures that reduce the value of prevailing---sanctions triggered by gains taken
by force, or commitments not to recognize such gains---therefore point in the
right direction.  Two things limit how much can be made of this.  A change in
the stakes moves the value of prevailing and planned exposure to disaster
together, and the paper establishes no general ranking of the parties'
utilities: Proposition \ref{fod-welfare} shows only that the exposure effect
can dominate, so that even a first-order increase in benefits can leave both
parties worse off.  The effect of the size of the benefit on welfare is
therefore ambiguous where its effect on peace is not.  And reducing what two
parties are fighting over is easier to recommend than to arrange.

The third instrument is different in kind.  It changes neither the escalation
technology nor the level of the benefit at stake, but only how well each party
understands what prevailing would be worth to it: what the contested
territory, disputed strait, political office, or legal precedent is actually
worth once the costs of acquiring and holding it have been counted.
Systematic estimates of the economic costs of conflict are increasingly
available \citep{Federle}, and can inform the private self-assessment studied
here.  Theorems \ref{mps} and \ref{welfare} say that improving such
self-assessment, while holding the mean fixed, raises both the probability of
peace and each party's ex-ante equilibrium utility.  The distinction is
between changing the stakes and improving their assessment while holding
the mean fixed: only the latter yields both rankings here.
The magnitudes need not be small.  In the uniform illustration with the
benefit range $[0,10]$, moving from no information to full information raises
the probability of peace from $0.286$ to $0.533$.

The mechanism is not greater caution: each party's individual expected
planned exposure remains unchanged.  What falls is expected planned exposure
at the first concession, as shown in Section~\ref{sec:information}.  The
confrontation ends earlier in hazard units on average without either party
planning to bear less hazard on average.

This is not the familiar transparency argument.  In rationalist accounts of
war, conflict can arise because parties misjudge one another and have reason
to misrepresent, so revealing information to the opponent can help prevent it
\citep{Fearon1995}.  Here nothing is revealed to the opponent, and because the
benefits are independent, a more accurate self-assessment tells a party
nothing about the party it faces.  Nor does the result argue against strategic
ambiguity in the tradition of \citet{Schelling}: ambiguity about resolve,
commitment, or the risk of escalation is a different instrument, on which the
model is silent.  The narrower conclusion concerns symmetric improvements in private
self-assessment, not disclosure to an opponent or the removal of strategic
ambiguity.

Although we do not model a sender choosing information, the symmetric
ranking gives a benchmark for anyone seeking to promote peace.  Among the symmetric information structures available given the
parties' prior, full information is the convex-order maximum, so Theorems
\ref{mps} and \ref{welfare} make it optimal on both counts.  The prescription
depends on the objective: a sender maximizing auction revenue, equivalently
expected dissipation, generally does better by pooling \citep{S_W_N}, whereas
\citet{Szydlowski} studies public information about relative strength for a
sender who wishes to sustain conflict.  In this symmetric comparison, the peace objective and the parties' welfare
also align: the same improvement raises both parties' ex-ante equilibrium
utilities, so neither has an ex-ante reason to refuse it.

This benchmark depends on what is learned, who learns it, and when.
The restriction that matters most is the private-value structure.  The results
concern a party's information about its own benefit from prevailing, with the
two benefits independent.  Information about a component the parties
share---how much the contested field actually holds---is a different object;
we make no claim for interdependent-value environments, where even the
equilibrium characterization becomes substantially more delicate.  The
remaining limits are narrower.  When only one party becomes better informed,
the positive result is local (Theorem \ref{onesided}(i)); even a small improvement
from an asymmetric starting point need not help (Theorem \ref{onesided}(ii)); and
the local result cannot be iterated,
because after the first change the parties no longer have the same
distribution.  The
comparison is also between information environments rather than between
moments within a crisis: parties who expect a better assessment to arrive may
wait for it, which is a different game.

Within these limits, the gains extend beyond the two parties.  Escalation
involving energy producers, infrastructure, or transport routes can impose
costs on outsiders through prices and trade \citep{IEA2022,Federle}.  Those
costs are external to our model; an additional loss incurred at disaster
would strengthen the symmetric case for better self-assessment, since both
the parties' welfare and the outside risk would improve.  The same logic
applies to the other attrition settings described in Section~\ref{sec:introduction},
subject to the same private-value and symmetry conditions.

\bibliographystyle{elsarticle-harv}
\bibliography{references}

@article{Acton,
  author  = {Acton, James M.},
  title   = {Escalation through Entanglement: How the Vulnerability of Command-and-Control Systems Raises the Risks of an Inadvertent Nuclear War},
  journal = {International Security},
  year    = {2018},
  volume  = {43},
  number  = {1},
  pages   = {56--99},
}

@article{Basak,
  author  = {Basak, Deepal},
  title   = {Social Value of Public Information in Bargaining},
  journal = {Economic Journal},
  year    = {2024},
  volume  = {134},
  number  = {660},
  pages   = {1356--1378},
}

@article{BergemannMorris,
  author  = {Bergemann, Dirk and Morris, Stephen},
  title   = {Information Design: A Unified Perspective},
  journal = {Journal of Economic Literature},
  year    = {2019},
  volume  = {57},
  number  = {1},
  pages   = {44--95},
}

@article{S_W_N,
  author  = {Bergemann, Dirk and Heumann, Tibor and Morris, Stephen and Sorokin, Constantine and Winter, Eyal},
  title   = {Optimal Information Disclosure in Classic Auctions},
  journal = {American Economic Review: Insights},
  year    = {2022},
  volume  = {4},
  number  = {3},
  pages   = {371--388},
}

@article{Blackwell,
  author  = {Blackwell, David},
  title   = {Equivalent Comparisons of Experiments},
  journal = {Annals of Mathematical Statistics},
  year    = {1953},
  volume  = {24},
  pages   = {265--272},
}

@article{BulowKlemperer,
  author  = {Bulow, Jeremy and Klemperer, Paul},
  title   = {The Generalized War of Attrition},
  journal = {American Economic Review},
  year    = {1999},
  volume  = {89},
  number  = {1},
  pages   = {175--189},
}

@misc{CastilloMiranda,
  author       = {Castillo-Quintana, Martin and Miranda-Romero, Gianfranco},
  title        = {Multiplicity of Equilibria in the War of Attrition with Two-Sided Asymmetric Information},
  year         = {2026},
  howpublished = {arXiv:2603.13634},
}

@article{Chen2025,
  author  = {Chen, Zhuoqiong},
  title   = {Know Thy Enemy: Information Acquisition in Contests},
  journal = {European Economic Review},
  year    = {2025},
  volume  = {177},
  pages   = {105051},
}

@article{ChiMurtoValimaki,
  author  = {Chi, Chang Koo and Murto, Pauli and V\"alim\"aki, Juuso},
  title   = {All-Pay Auctions with Affiliated Binary Signals},
  journal = {Journal of Economic Theory},
  year    = {2019},
  volume  = {179},
  pages   = {99--130},
}

@article{EkmekciZhang,
  author  = {Ekmekci, Mehmet and Zhang, Hanzhe},
  title   = {Reputational Bargaining with External Resolution Opportunities},
  journal = {Review of Economic Studies},
  year    = {2025},
  volume  = {92},
  number  = {4},
  pages   = {2472--2501},
}

@article{EwerhartLareida,
  author  = {Ewerhart, Christian and Lareida, Julia},
  title   = {Voluntary Disclosure in Asymmetric Contests},
  journal = {Review of Economic Studies},
  year    = {2024},
  volume  = {91},
  number  = {6},
  pages   = {3402--3422},
}

@article{FanningDeadline,
  author  = {Fanning, Jack},
  title   = {Reputational Bargaining and Deadlines},
  journal = {Econometrica},
  year    = {2016},
  volume  = {84},
  number  = {3},
  pages   = {1131--1179},
}

@article{Federle,
  author  = {Federle, Jonathan and Meier, Andr\'e and M\"uller, Gernot J. and Mutschler, Willi and Schularick, Moritz},
  title   = {The Price of War},
  journal = {American Economic Review},
  year    = {2026},
  volume  = {116},
  number  = {3},
  pages   = {791--827},
}

@article{Fearon1995,
  author  = {Fearon, James D.},
  title   = {Rationalist Explanations for War},
  journal = {International Organization},
  year    = {1995},
  volume  = {49},
  number  = {3},
  pages   = {379--414},
}

@article{FengSong2025,
  author  = {Feng, Xin and Song, Shuangteng},
  title   = {Information Acquisition in All-Pay Contests},
  journal = {Economics Letters},
  year    = {2025},
  volume  = {256},
  pages   = {112585},
}

@article{FudTir,
  author  = {Fudenberg, Drew and Tirole, Jean},
  title   = {A Theory of Exit in Duopoly},
  journal = {Econometrica},
  year    = {1986},
  volume  = {54},
  number  = {4},
  pages   = {943--960},
}

@article{FuchsSkrzypacz,
  author  = {Fuchs, William and Skrzypacz, Andrzej},
  title   = {Bargaining with Deadlines and Private Information},
  journal = {American Economic Journal: Microeconomics},
  year    = {2013},
  volume  = {5},
  number  = {4},
  pages   = {219--243},
}

@article{GanuzaPenalva,
  author  = {Ganuza, Juan-Jos\'e and Penalva, Jos\'e S.},
  title   = {Signal Orderings Based on Dispersion and the Supply of Private Information in Auctions},
  journal = {Econometrica},
  year    = {2010},
  volume  = {78},
  number  = {3},
  pages   = {1007--1030},
}

@article{Georgiadis,
  author  = {Georgiadis, George and Kim, Youngsoo and Kwon, H. Dharma},
  title   = {The Absence of Attrition in a War of Attrition under Complete Information},
  journal = {Games and Economic Behavior},
  year    = {2022},
  volume  = {131},
  pages   = {171--185},
}

@article{Gieczewski,
  author  = {Gieczewski, Germ\'an},
  title   = {Evolving Wars of Attrition},
  journal = {Journal of Economic Theory},
  year    = {2025},
  volume  = {224},
  pages   = {105967},
}

@article{Hennigs,
  author  = {Hennigs, Raphaela},
  title   = {Conflict Prevention by {Bayesian} Persuasion},
  journal = {Journal of Public Economic Theory},
  year    = {2021},
  volume  = {23},
  pages   = {710--731},
}

@article{Herrera,
  author  = {Herrera, Helios and Mac\'e, Antonin and N\'u\~nez, Mat\'ias},
  title   = {Political Brinkmanship and Compromise},
  journal = {International Economic Review},
  year    = {2025},
  volume  = {66},
  number  = {3},
  pages   = {1317--1339},
}

@article{Horner,
  author  = {H\"orner, Johannes and Morelli, Massimo and Squintani, Francesco},
  title   = {Mediation and Peace},
  journal = {Review of Economic Studies},
  year    = {2015},
  volume  = {82},
  number  = {4},
  pages   = {1483--1501},
}

@book{IEA2022,
  author    = {{International Energy Agency}},
  title     = {World Energy Outlook 2022},
  publisher = {OECD Publishing},
  address   = {Paris},
  year      = {2022},
}

@article{Distr,
  author  = {Jensen, Martin K.},
  title   = {Distributional Comparative Statics},
  journal = {Review of Economic Studies},
  year    = {2018},
  volume  = {85},
  number  = {1},
  pages   = {581--610},
}

@article{BPers,
  author  = {Kamenica, Emir and Gentzkow, Matthew},
  title   = {{Bayesian} Persuasion},
  journal = {American Economic Review},
  year    = {2011},
  volume  = {101},
  number  = {6},
  pages   = {2590--2615},
}

@article{IA_woa,
  author  = {Kim, Kyungmin and Lee, Frances Zhiyun Xu},
  title   = {Information Acquisition in a War of Attrition},
  journal = {American Economic Journal: Microeconomics},
  year    = {2014},
  volume  = {6},
  number  = {2},
  pages   = {37--78},
}

@article{KovenockMorathMuenster,
  author  = {Kovenock, Dan and Morath, Florian and M\"unster, Johannes},
  title   = {Information Sharing in Contests},
  journal = {Journal of Economics \& Management Strategy},
  year    = {2015},
  volume  = {24},
  number  = {3},
  pages   = {570--596},
}

@article{KotowskiLi,
  author  = {Kotowski, Maciej H. and Li, Fei},
  title   = {The War of Attrition and the Revelation of Valuable Information},
  journal = {Economics Letters},
  year    = {2014},
  volume  = {124},
  number  = {3},
  pages   = {420--423},
}

@article{w-o-a-all-pay,
  author  = {Krishna, Vijay and Morgan, John},
  title   = {An Analysis of the War of Attrition and the All-Pay Auction},
  journal = {Journal of Economic Theory},
  year    = {1997},
  volume  = {72},
  number  = {2},
  pages   = {343--362},
}

@article{LuParreiras,
  author  = {Lu, Jingfeng and Parreiras, S\'ergio O.},
  title   = {Monotone Equilibrium of Two-Bidder All-Pay Auctions Redux},
  journal = {Games and Economic Behavior},
  year    = {2017},
  volume  = {104},
  pages   = {78--91},
}

@article{MaManove,
  author  = {Ma, Ching-to Albert and Manove, Michael},
  title   = {Bargaining with Deadlines and Imperfect Player Control},
  journal = {Econometrica},
  year    = {1993},
  volume  = {61},
  number  = {6},
  pages   = {1313--1339},
}

@article{MM,
  author  = {Marschak, Jacob and Miyasawa, Koichi},
  title   = {Economic Comparability of Information Systems},
  journal = {International Economic Review},
  year    = {1968},
  volume  = {9},
  number  = {2},
  pages   = {137--174},
}

@article{MorathIgnorance,
  author  = {Morath, Florian},
  title   = {Volunteering and the Strategic Value of Ignorance},
  journal = {Social Choice and Welfare},
  year    = {2013},
  volume  = {41},
  number  = {1},
  pages   = {99--131},
}

@article{MorathMuenster2008,
  author  = {Morath, Florian and M\"unster, Johannes},
  title   = {Private versus Complete Information in Auctions},
  journal = {Economics Letters},
  year    = {2008},
  volume  = {101},
  number  = {3},
  pages   = {214--216},
}

@article{MorathMuenster2013,
  author  = {Morath, Florian and M\"unster, Johannes},
  title   = {Information Acquisition in Conflicts},
  journal = {Economic Theory},
  year    = {2013},
  volume  = {54},
  number  = {1},
  pages   = {99--129},
}

@article{Myatt,
  author  = {Myatt, David P.},
  title   = {The Impact of Perceived Strength in the War of Attrition},
  journal = {Games and Economic Behavior},
  year    = {2025},
  volume  = {150},
  pages   = {260--277},
}

@article{N1,
  author  = {Nalebuff, Barry},
  title   = {Brinkmanship and Nuclear Deterrence: The Neutrality of Escalation},
  journal = {Conflict Management and Peace Science},
  year    = {1986},
  volume  = {9},
  number  = {2},
  pages   = {19--30},
}

@article{NalebuffRiley,
  author  = {Nalebuff, Barry and Riley, John},
  title   = {Asymmetric Equilibria in the War of Attrition},
  journal = {Journal of Theoretical Biology},
  year    = {1985},
  volume  = {113},
  number  = {3},
  pages   = {517--527},
}

@article{Clara,
  author  = {Ponsat\'i, Clara and S\'akovics, J\'ozsef},
  title   = {The War of Attrition with Incomplete Information},
  journal = {Mathematical Social Sciences},
  year    = {1995},
  volume  = {29},
  number  = {3},
  pages   = {239--254},
}

@article{Posen,
  author  = {Posen, Barry R.},
  title   = {Inadvertent Nuclear War? Escalation and {NATO}'s Northern Flank},
  journal = {International Security},
  year    = {1982},
  volume  = {7},
  number  = {2},
  pages   = {28--54},
}

@article{Powell1988,
  author  = {Powell, Robert},
  title   = {Nuclear Brinkmanship with Two-Sided Incomplete Information},
  journal = {American Political Science Review},
  year    = {1988},
  volume  = {82},
  number  = {1},
  pages   = {155--178},
}

@article{RentschlerTurocy,
  author  = {Rentschler, Lucas and Turocy, Theodore L.},
  title   = {Two-Bidder All-Pay Auctions with Interdependent Valuations, Including the Highly Competitive Case},
  journal = {Journal of Economic Theory},
  year    = {2016},
  volume  = {163},
  pages   = {435--466},
}

@book{Schelling,
  author    = {Schelling, Thomas C.},
  title     = {The Strategy of Conflict},
  publisher = {Harvard University Press},
  address   = {Cambridge, MA},
  year      = {1960}
}

@article{SchramChance,
  author  = {Schram, Peter},
  title   = {Conflicts that Leave Something to Chance},
  journal = {International Organization},
  year    = {2025},
  volume  = {79},
  number  = {2},
  pages   = {199--232},
}

@article{Seel,
  author  = {Seel, Christian},
  title   = {The Value of Information in Asymmetric All-Pay Auctions},
  journal = {Games and Economic Behavior},
  year    = {2014},
  volume  = {86},
  pages   = {330--338},
}

@book{SOrders,
  author    = {Shaked, Moshe and Shanthikumar, J. George},
  title     = {Stochastic Orders},
  publisher = {Springer},
  address   = {New York},
  year      = {2007},
}

@article{Siegel2014,
  author  = {Siegel, Ron},
  title   = {Asymmetric All-Pay Auctions with Interdependent Valuations},
  journal = {Journal of Economic Theory},
  year    = {2014},
  volume  = {153},
  pages   = {684--702},
}

@article{Smith,
  author  = {Maynard Smith, John and Parker, Geoffrey A.},
  title   = {The Logic of Asymmetric Contests},
  journal = {Animal Behaviour},
  year    = {1976},
  volume  = {24},
  number  = {1},
  pages   = {159--175},
}

@article{Szydlowski,
  author  = {Szydlowski, Martin},
  title   = {Fomenting Conflict},
  journal = {Journal of Economic Theory},
  year    = {2024},
  volume  = {220},
  pages   = {105875},
}

@article{Warneryd,
  author  = {W\"arneryd, Karl},
  title   = {Information in Conflicts},
  journal = {Journal of Economic Theory},
  year    = {2003},
  volume  = {110},
  number  = {1},
  pages   = {121--136},
}

@techreport{Wan,
  author      = {Wan, Wilfred},
  title       = {Addressing Multidomain Nuclear Escalation Risk},
  institution = {Stockholm International Peace Research Institute},
  address     = {Stockholm},
  type        = {SIPRI Research Policy Paper},
  year        = {2026},
}

@article{Zheng,
  author  = {Zheng, Charles Z.},
  title   = {Necessary and Sufficient Conditions for Peace: Implementability versus Security},
  journal = {Journal of Economic Theory},
  year    = {2019},
  volume  = {180},
  pages   = {135--166},
}

\pagebreak

\section*{Appendix}

The proofs follow the order of the main text.  We first establish the
selected symmetric equilibrium and its probability of peace
(Section~\ref{sec:model}), then compare stakes, information, and welfare
(Section~\ref{sec:stakes}).  Finally, we turn to one-sided information
(Section~\ref{sec:onesided}), proving the local result and half-effect
identity before the reversal away from symmetry.  Auxiliary lemmas are
introduced at the steps that require them.  The Supplementary Appendix
provides expanded derivations of the equilibrium-selection and endpoint
arguments.

\subsection*{Model and equilibrium}

Proposition \ref{FOC} requires more than constructing a symmetric
strategy: we must establish that the common wild-type refinement selects
it.  We first express concession decisions in units of disaster hazard,
then identify the condition that makes the perturbed equilibrium unique.

Let
\[
H(t)=-\log S_G(t)
\]
be accumulated disaster hazard.  Because $G$ has a positive density and full
support, $H$ is a strictly increasing map from $[0,\infty)$ onto itself.
If a party plans to concede at hazard level $b=H(t)$, reaching that level has
probability $e^{-b}$.  After integrating out the disaster threshold, the game
on the $b$-scale is therefore an exponentially discounted concession game:
the conceding party obtains $e^{-b}$, the other party obtains
$(1+\theta)e^{-b}$, and perpetual disagreement gives zero.  This is the
attrition structure analyzed by \citet{Clara}.

The remaining selection issue is that identical posterior-value
distributions need not, by themselves, rule out asymmetric equilibria.
Give each party probability $c\in(0,1)$ of being an ordinary type and
probability $1-c$ of being a wild type that never concedes.
We use the standard no-atoms and monotonicity characterization of
\citet{NalebuffRiley} and \citet{Clara}.  Within that characterization,
the following lemma shows how the common wild-type perturbation fixes
both concession strategies.  We state it for possibly different
ordinary-type distributions, so that it can also be used in the
one-sided comparison later; its symmetric case is what is needed here.

\begin{lemma}[The common-upper-endpoint condition]\label{wild-type-uniqueness}
Fix $c\in(0,1)$ and let $Q_i$ be party $i$'s ordinary-type quantile function.
Within the standard monotone characterization, with no atoms at positive
concession levels and at most one party having an atom at zero, the wild-type
perturbation has a unique equilibrium.  The ordinary types' finite
concession ranges have a common upper endpoint, and ranks $u$ and $v$
assigned the same positive concession level satisfy
\begin{equation}\label{wild-type-matching}
\Lambda_{2,c}(v)=\Lambda_{1,c}(u)+C_c,
\qquad
C_c=\Lambda_{2,c}(1)-\Lambda_{1,c}(1),
\end{equation}
where
\[
\Lambda_{i,c}(w)=\int_{1/2}^w\frac{ds}{Q_i(s)(1-cs)}.
\]
Thus the rank-matching function, any interval of lowest types that concedes
immediately, and the positive parts of both concession strategies are uniquely
determined.  If
$Q_1=Q_2$, then $C_c=0$ and the equilibrium is symmetric.
\end{lemma}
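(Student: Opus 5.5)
The plan is to pass to the disaster-hazard clock, write each ordinary type's indifference condition in rank coordinates, and show that the common wild-type perturbation supplies exactly the boundary condition that removes the free constant. On the $b=H(t)$ scale the game is the exponentially discounted concession game described above. I will work within the standard characterization of \citet{NalebuffRiley} and \citet{Clara}: monotone strategies, no atoms at positive levels, and at most one party with an atom at zero.

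Let $U_i(b)$ denote the ordinary rank of party $i$ that concedes at hazard level $b$ on the common positive range. The probability that party $j$ has conceded by $b$ is then $cU_j(b)$. A type $\theta$ of party $i$ that stops at $b$ receives
\[
e^{-b}\bigl(1-cU_j(b)\bigr)+(1+\theta)\int_0^b e^{-s}c\,dU_j(s)
\]
(plus any atom at zero). Differentiating in $b$ and imposing optimality at the type actually assigned to $b$, namely $\theta=Q_i(U_i(b))$, gives
\[
cQ_i(U_i)\,U_j'(b)=1-cU_j(b).
\]
Taking this for $i=1,2$ and dividing the two equations yields
\[
\frac{dU_2}{Q_2(U_2)(1-cU_2)}=\frac{dU_1}{Q_1(U_1)(1-cU_1)},
\]
that is, $d\Lambda_{2,c}(U_2)=d\Lambda_{1,c}(U_1)$. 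Hence along the matched range $\Lambda_{2,c}(v)=\Lambda_{1,c}(u)+C$ for some constant $C$. Because $c<1$, we have $1-cs\geq 1-c>0$, so each $\Lambda_{i,c}$ is finite and strictly increasing up to rank $1$. This is precisely where the perturbation matters: without it, $\Lambda$ diverges at the top, which is the source of the free constant in the unperturbed game.

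Next comes the common-upper-endpoint argument. Suppose party $j$'s highest ordinary concession level $\bar b_j$ were strictly below $\bar b_i$. Then any ordinary type of $i$ planning to concede in $(\bar b_j,\bar b_i]$ faces only wild types beyond $\bar b_j$. Continuing past $\bar b_j$ lowers its payoff from $e^{-\bar b_j}(1-c)$ with no chance of winning, so conceding at $\bar b_j$ is a strict improvement. This contradicts the no-atom and monotonicity requirements. Therefore $\bar b_1=\bar b_2$, both top ranks are matched, $U_1=U_2=1$ there, and $C=\Lambda_{2,c}(1)-\Lambda_{1,c}(1)=C_c$. If $Q_1=Q_2$, then $C_c=0$, the matching is $v=u$, and the equilibrium is symmetric.

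It remains to pin down the levels. Given the matching function $v(u)$, the ODE gives $db=cQ_2(v(u))\,du/(1-cu)>0$. Integrating downward from the common top $\bar b$ determines $b$ as a function of rank, up to the single constant $\bar b$. Now follow the matching down to the first party whose ranks are exhausted, that is, where $u\to0$ or $v\to0$ first, depending on the sign of $C_c$ and the behavior of $\Lambda_{i,c}$ near $0$, where it may diverge to $-\infty$. The remaining lower ranks of the other party cannot concede at any positive level without creating an atom or a gap that someone would exploit. So they concede at zero as an atom, and the bottom of the matched range must sit at $b=0$. This fixes $\bar b$ as the total hazard length $\int c\,Q_2(v(u))\,du/(1-cu)$ over the matched range, which is finite. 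Uniqueness of the matching, of the immediate-concession interval, and of both positive strategies then follows.

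The main obstacle I expect is this lower-end bookkeeping. I must show that at most one party can have an atom at zero and that the matched range reaches $b=0$ exactly. I must also handle $Q_i(0)=0$, where $\Lambda_{i,c}\to-\infty$, so that the matching still terminates correctly. Finally, I must verify that the local first-order conditions are globally sufficient, which I would check by the single-crossing sign argument used for \eqref{EU}.
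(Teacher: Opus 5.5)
Your proposal is correct and follows essentially the paper's route: divide the two first-order conditions to get $d\Lambda_{2,c}(v)=d\Lambda_{1,c}(u)$, force a common top (beyond the opponent's highest ordinary level only wild types remain) and hence $C_c$, place the lowest escalating level at zero by the no-gap argument, and use single crossing for existence. You parametrize by hazard level and pin $\bar b$ from the top down, whereas the paper integrates $b_i'$ upward from zero; the two are equivalent. When you write out existence, include the paper's check that conceding at zero against an opponent's atom there is no better than conceding just after zero (a tie versus a win). Also note that ``at most one party has an atom at zero'' is a hypothesis of the lemma, not something you must prove.
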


\begin{proof}{of Lemma}{\ref{wild-type-uniqueness}}
Suppose one party's highest ordinary concession level were strictly above the
other's.  Upon reaching the lower endpoint without a concession, an ordinary
type would know that the opponent is wild: the ordinary opponent has already
exhausted its concession range.  Continuing would then add disaster risk
without any chance of prevailing, so the higher endpoint cannot be optimal.  The
two finite concession ranges must therefore share an upper endpoint.

On their common positive range, divide the two first-order conditions and let
$v=v(u)$ denote the rank of party 2 assigned the same concession level as
party 1's rank $u$.
This gives
\[
\frac{v'(u)}{Q_2(v(u))[1-cv(u)]}
=\frac{1}{Q_1(u)(1-cu)}.
\]
Integration yields the first equation in \eqref{wild-type-matching}.  At the
common upper endpoint the top ordinary types are matched, so $u=v=1$ and the
second equation follows.  The terminal values are finite because
$1-cs\geq1-c>0$ and the quantile functions are bounded away from zero near
the top.

Since each $\Lambda_{i,c}$ is strictly increasing, \eqref{wild-type-matching}
determines a unique rank-matching function wherever its right-hand side lies
in the other party's range.  If it falls outside that range at the bottom, the uniquely
determined residual interval consists of lowest types that concede at level
zero.  The lowest escalating type of each party concedes at level zero: if
the lowest positive concession level were $b_0>0$, no ordinary opponent
would concede on $(0,b_0)$, so a type planning to concede at $b_0$ would gain
by conceding at zero and avoiding the disaster risk on $(0,b_0)$.  Finally,
if $b_i$ denotes party $i$'s planned-exposure function in rank coordinates,
the first-order conditions give
\[
b_1'(u)=\frac{cQ_2(v(u))}{1-cu},
\qquad
b_2'(v)=\frac{cQ_1(u(v))}{1-cv}.
\]
Integrating from the lowest positive concession level, which is zero, pins
down both planned-exposure functions and therefore both physical concession
strategies.  Hence there is at most one equilibrium within the standard
characterization.

For existence, construct these strategies from the matching equation.
If party 1 with value $\theta$ chooses the positive level assigned to its
rank $u$, differentiating its payoff gives
\[
\frac{dU_{1,c}}{du}
=c e^{-b_1(u)}v'(u)\bigl[\theta-Q_1(u)\bigr].
\]
Thus its prescribed rank is a global best response on the positive range;
the same argument applies to party 2.  An unmatched bottom type prefers
zero to every positive level.  If instead the opponent has an atom at
zero, conceding at zero yields no more than the limit payoff from conceding
just after zero, because the latter wins rather than ties against that atom.
Finally, continuing beyond the common top only adds disaster risk against
a wild opponent.  These observations cover all deviations and establish
existence.  When $Q_1=Q_2$, equation
\eqref{wild-type-matching} gives $C_c=0$ and $v(u)=u$, establishing symmetry.
\end{proof}

\begin{proof}{of Proposition}{\ref{FOC}}
Specialize Lemma \ref{wild-type-uniqueness} to the case in which both
parties' ordinary types have distribution $F$.  The perturbed equilibrium
is unique and, by symmetry, is symmetric and increasing among ordinary
types.  Write its physical concession strategy as $\tau_c$.

If an ordinary type $\theta$ chooses the level assigned to ordinary type $x$,
its expected payoff is
\begin{equation}\label{perturbed-EU}
U_c(x,\theta;\tau_c)
=S_G(\tau_c(x))[1-cF(x)]
+(1+\theta)c\int_{\underline\theta}^x
S_G(\tau_c(y))\,dF(y).
\end{equation}
The first term includes both wild opponents and ordinary opponents with type
above $x$.  Differentiating gives
\begin{equation}\label{perturbed-report-derivative}
\frac{\partial U_c}{\partial x}
=-g(\tau_c(x))\tau_c'(x)[1-cF(x)]
+\theta cS_G(\tau_c(x))f(x).
\end{equation}
Truthful choice by type $x$ therefore requires
\begin{equation}\label{perturbed-FOC}
\lambda_G(\tau_c(x))\tau_c'(x)
=\frac{xc f(x)}{1-cF(x)}.
\end{equation}
The lowest ordinary type concedes immediately: waiting would reduce its
payoff while it concedes before every ordinary opponent.  Hence
$\tau_c(\underline\theta)=0$.  Define
\[
\beta_{F,c}(x)
=\int_{\underline\theta}^x
\frac{zc f(z)}{1-cF(z)}\,dz.
\]
Integrating \eqref{perturbed-FOC} gives
\begin{equation}\label{perturbed-survival}
S_G(\tau_c(x))=e^{-\beta_{F,c}(x)}.
\end{equation}
Conversely, substituting \eqref{perturbed-FOC} into
\eqref{perturbed-report-derivative} yields
\[
\frac{\partial U_c}{\partial x}
=cS_G(\tau_c(x))f(x)(\theta-x).
\]
Thus the constructed strategy is globally incentive compatible.  Deviations
above the top level $\tau_c(\bar\theta)$ are dominated: beyond that level any
opponent still present is wild, so further escalation adds disaster risk
without a chance of prevailing.  A deviation to level zero is already included
as the report $x=\underline\theta$.  Notice also that
$1-cF(x)\geq1-c$, so $\beta_{F,c}(\bar\theta)<\infty$: every
ordinary type concedes at a finite level, as the wild-type refinement
requires.

Finally let $c\uparrow1$.  Since
$\partial_c[c/(1-cF(z))]=(1-cF(z))^{-2}>0$, monotone convergence gives, for
every $x<\bar\theta$,
\[
\beta_{F,c}(x)\longrightarrow
\int_{\underline\theta}^x\frac{zf(z)}{S_F(z)}\,dz
=\beta_F(x).
\]
By \eqref{perturbed-survival},
$\tau_c=H^{-1}\circ\beta_{F,c}$, so the selected physical concession strategies converge
pointwise on the interior to $\tau=H^{-1}\circ\beta_F$.  This proves
\eqref{FOC-i} and \eqref{equilibrium-survival}.  Moreover,
$\beta_F(x)\to\infty$ as $x\uparrow\bar\theta$: for any
$x_0\in(\underline\theta,\bar\theta)$ with $x_0>0$,
\[
\int_{x_0}^x\frac{zf(z)}{S_F(z)}\,dz
\geq x_0\log\frac{S_F(x_0)}{S_F(x)}\longrightarrow\infty.
\]
Thus every finite concession level is represented by some
$x<\bar\theta$; never conceding is the limit as $x\uparrow\bar\theta$.
Taking $c=1$ in
\eqref{perturbed-EU} and repeating the calculation in
\eqref{perturbed-report-derivative} gives
\begin{equation}\label{report-derivative}
U_x(x,\theta;\tau)
=S_G(\tau(x))f(x)(\theta-x).
\end{equation}
Hence the limiting physical concession strategy is strictly increasing, globally incentive
compatible, and uniquely determined by the refinement; the global
best-response property follows directly from the sign in
\eqref{report-derivative}.
\end{proof}

With the selected strategy established, we can derive the probability of
peace used in the remaining proofs.  Let $\nu(u)=F^{-1}(u)$ and
$\theta_{(1)}=\min\{\theta_1,\theta_2\}$.  The density of
$\theta_{(1)}$ is $2S_F(x)f(x)$, so
\[
q[F]
=E[e^{-\beta_F(\theta_{(1)})}]
=2\int_{\underline\theta}^{\bar\theta}
e^{-\beta_F(x)}S_F(x)f(x)\,dx.
\]
Under the change of variables $u=F(x)$,
\[
\beta_F(\nu(u))
=\int_0^u\frac{\nu(v)}{1-v}\,dv.
\]
Substitution gives the functional in \eqref{peace1}.

\subsection*{Stakes and information}

With the selected equilibrium in hand, we can compare distributions using
the probability-of-peace formula \eqref{peace1}.  As in the main text,
we first raise the stakes and then improve information while holding the
mean fixed.  The utility comparisons require a separate expression for
ex-ante payoffs.

\subsubsection*{Higher stakes}

First-order stochastic dominance orders the quantiles pointwise, so the
peace comparison is immediate.  To assess whether the higher stakes also
benefit the parties, we first express their equilibrium utility in the
same quantile coordinates.

For the utility comparison, write $\overline U[\nu]=\overline U[F]$ for either
party's ex-ante equilibrium utility and define
\[
\sigma_\nu(u)
=\exp\left[-\int_0^u\frac{\nu(v)}{1-v}\,dv\right],
\qquad
N_\nu(u)=\int_u^1\nu(v)\,dv.
\]
Changing variables in \eqref{EU}, averaging over a party's own type, and
using Fubini's theorem gives
\begin{align*}
\overline U[\nu]
&=\int_0^1\left[\sigma_\nu(u)(1-u)
+(1+\nu(u))\int_0^u\sigma_\nu(v)\,dv\right]du\\
&=\int_0^1\sigma_\nu(v)
\left[(1-v)+\int_v^1(1+\nu(u))\,du\right]dv.
\end{align*}
Therefore
\begin{equation}\label{utility-quantile}
\overline U[\nu]
=\int_0^1\sigma_\nu(u)
\left[2(1-u)+N_\nu(u)\right]du.
\end{equation}

\begin{proof}{of Proposition}{\ref{FOD}}
If $F^1$ first-order stochastically dominates $F^0$, then the
corresponding quantile functions satisfy $\nu^1(u)\geq\nu^0(u)$ for
all $u\in[0,1]$.  The exponential in \eqref{peace1} is therefore weakly lower
under $F^1$ at every quantile rank $u$, proving the result.  The inequality
is strict whenever the dominance comparison is strict on a set of positive
measure.
\end{proof}

\begin{proof}{of Proposition}{\ref{fod-welfare}}
Use the quantile representation \eqref{utility-quantile} derived above.

For $F^0(x)=x$, the quantile is $\nu^0(u)=u$, so
\[
\sigma_{\nu^0}(u)=e^u(1-u),
\qquad
N_{\nu^0}(u)=\frac{1-u^2}{2}.
\]
Equations \eqref{peace1} and \eqref{utility-quantile} give
\[
q[F^0]=4e-10\approx0.873127,
\qquad
\overline U[F^0]=3e-7\approx1.154845.
\]

For $F^1(x)=x^2$, the quantile is $\nu^1(u)=\sqrt u$.  Direct integration
gives
\[
\sigma_{\nu^1}(u)
=e^{2\sqrt u}\frac{1-\sqrt u}{1+\sqrt u},
\qquad
N_{\nu^1}(u)=\frac23(1-u^{3/2}).
\]
Substitution yields
\[
q[F^1]=\frac{9-e^2}{2}\approx0.805472,
\qquad
\overline U[F^1]\approx1.119844.
\]
Thus both the probability of peace and ex-ante utility fall under this
first-order stochastic increase.
\end{proof}

\subsubsection*{More accurate information}

For the symmetric information comparison, we first prove the more general
planned-exposure ordering in Proposition \ref{mps-dcx}; Theorem \ref{mps} then follows
by choosing the exponential survival transform.

\begin{proof}{of Proposition}{\ref{mps-dcx}}
For a distribution $F$ with quantile function $\nu=F^{-1}$, let
$V=F(\theta_{(1)})$, which has density $2(1-u)$ on $[0,1]$.  Then the
random variable $Z_F=\beta_F(\theta_{(1)})$ defined in the main text can be
written as
\[
Z_F=\int_0^V\frac{\nu(u)}{1-u}\,du.
\]
Thus $Z_F$ is the accumulated disaster hazard at the planned concession
level of the lower-value party, and equation \eqref{peace1} gives
$q[F]=E[e^{-Z_F}]$.

For $k=0,1$, let
\[
\nu^k(u)=(F^k)^{-1}(u),
\]
and define the quantile difference and its integral by
\[
\Delta\nu(u)=\nu^1(u)-\nu^0(u),
\qquad
\Gamma(u)=\int_0^u\Delta\nu(v)\,dv.
\]
The quantile characterization of convex order gives
\begin{equation}\label{mps-quantile-condition}
\Gamma(u)\leq0\quad\text{for every }u\in[0,1],
\qquad
\Gamma(0)=\Gamma(1)=0.
\end{equation}
The functional in \eqref{peace1}, and the more general functional below, is
well defined for every bounded nonnegative nondecreasing quantile function.
The path used next is therefore a path of quantile functions; only this
functional representation is used at intermediate points.  Assumption
\ref{ass:regularity} is needed to identify the endpoints with the smooth
equilibria characterized in Proposition \ref{FOC}, not to differentiate
along the path.
Connect the quantiles by
$\nu_\xi=(1-\xi)\nu^0+\xi\nu^1$, $\xi\in[0,1]$, and set
\[
\mathcal Z_\xi(u)=\int_0^u\frac{\nu_\xi(v)}{1-v}\,dv,
\qquad
\mathcal Q_\Phi(\xi)=2\int_0^1(1-u)\Phi(\mathcal Z_\xi(u))\,du.
\]
Since $\Phi$ is convex and decreasing,
$|\Phi'(z)|\leq|\Phi'(0)|$ for $z\geq0$.  Since
$|\Delta\nu|\leq\bar\theta$, the difference quotient of
$\Phi(\mathcal Z_\xi(u))$ in $\xi$ is bounded by
$|\Phi'(0)|\bar\theta\log(1/(1-u))$, and
$(1-u)\log(1/(1-u))$ is integrable.  This justifies differentiation under
the integral and the exchange of integrals below.
If
$k_\xi(u)=-\Phi'(\mathcal Z_\xi(u))$, differentiation under the integral and
Fubini's theorem give
\begin{align*}
\mathcal Q_\Phi'(\xi)
&=-2\int_0^1(1-u)k_\xi(u)
\left(\int_0^u\frac{\Delta\nu(v)}{1-v}\,dv\right)du\\
&=-2\int_0^1 A_\xi(v)\Delta\nu(v)\,dv,
\end{align*}
where
\[
A_\xi(v)=\frac{1}{1-v}\int_v^1(1-u)k_\xi(u)\,du.
\]
Thus $(1-v)A_\xi(v)$ aggregates the contribution from posterior ranks above
$v$.  In the peace comparison, where $\Phi(z)=e^{-z}$, it is one half of
the probability of peace generated by those ranks.
Since $\Delta\nu=\Gamma'$, integration by parts gives
\begin{equation}\label{mps-path-derivative}
\mathcal Q_\Phi'(\xi)=2\int_0^1A_\xi'(v)\Gamma(v)\,dv.
\end{equation}
The boundary term vanishes because $\Gamma(0)=\Gamma(1)=0$ and, by the
bound below,
$0\leq A_\xi(v)\leq(1-v)k_\xi(0)/2\to0$ as $v\uparrow1$.
Because posterior values are nonnegative, $\mathcal Z_\xi$ is nondecreasing.
Since $\Phi$ is decreasing and convex, $k_\xi$ is therefore nonnegative and
decreasing in $u$.  Hence
\[
0\leq\int_v^1(1-u)k_\xi(u)\,du
\leq k_\xi(v)\int_v^1(1-u)\,du
=\frac{(1-v)^2}{2}k_\xi(v),
\]
and therefore
\begin{equation}\label{A-prime-bound}
A_\xi'(v)
=-k_\xi(v)
+\frac{1}{(1-v)^2}\int_v^1(1-u)k_\xi(u)\,du
\leq-\frac12k_\xi(v)\leq0.
\end{equation}
Together with \eqref{mps-quantile-condition}, equation
\eqref{mps-path-derivative} implies $\mathcal Q_\Phi'(\xi)\geq0$.  Integrating
over $\xi\in[0,1]$ proves \eqref{dcx-ordering}.  If $\Phi'<0$ everywhere,
then $k_\xi(v)>0$ and $A_\xi'(v)<0$ everywhere, so the inequality is strict
whenever the convex-order comparison is nontrivial: otherwise
$\Gamma\equiv0$, which would imply $\Delta\nu=0$ almost everywhere.

Finally, writing
$\mathcal Z_\nu(u)=\int_0^u\nu(v)/(1-v)\,dv$ and using the
density $2(1-u)$ of $V$, Fubini's theorem gives
\begin{align*}
E[Z_F]
&=2\int_0^1(1-u)\mathcal Z_\nu(u)\,du\\
&=\int_0^1(1-v)\nu(v)\,dv
=\frac12E[\theta_{(1)}].
\end{align*}
This also verifies directly the linear case $\Phi(z)=-z$.
\end{proof}

\begin{proof}{of Theorem}{\ref{mps}}
Apply Proposition \ref{mps-dcx} with the decreasing convex function
$\Phi(z)=e^{-z}$ and use $q[F]=E[e^{-Z_F}]$.
\end{proof}

\subsubsection*{Equilibrium welfare}

The peace comparison does not by itself rank the parties' utilities:
information also changes the value of the allocation at peaceful
resolution.  We therefore return to the utility formula
\eqref{utility-quantile} and compare it along the same quantile path.

\begin{proof}{of Theorem}{\ref{welfare}}
Conditional on a peaceful resolution, the sum of the two parties'
payoffs is $2+\widetilde\theta_w$, where $w$ denotes the winner.
The identity
\[
E[\widetilde\theta_w\mathbf 1_{\{\mathrm{peace}\}}]
=E[\theta_w\mathbf 1_{\{\mathrm{peace}\}}]
\]
follows from iterated expectations.  Conditional on the parties' signals
and the independent threshold $T$, the winner and the peace indicator are
determined, while
$E[\widetilde\theta_i\mid s_1,s_2,T]
=E[\widetilde\theta_i\mid s_i]=\theta_i$.
Hence aggregate
ex-ante welfare can be written as
\begin{equation}\label{aggregate-welfare}
W[\nu]=2q[\nu]+\mathcal A[\nu],
\qquad
\mathcal A[\nu]=2\int_0^1\sigma_\nu(u)N_\nu(u)\,du.
\end{equation}
Theorem \ref{mps} shows that the first term increases under a
mean-preserving spread.  It remains to show that the allocation component
$\mathcal A$ also increases.

Use the quantile path $\nu_\xi$, the quantile difference $\Delta\nu$, and
its integral $\Gamma$ from the proof of
Proposition \ref{mps-dcx}.  Write $\sigma_\xi=\sigma_{\nu_\xi}$ and
$N_\xi=N_{\nu_\xi}$, and let $\mathcal A(\xi)=\mathcal A[\nu_\xi]$.
Since the two endpoint distributions have the same
mean,
\[
\frac{d}{d\xi}N_\xi(u)
=\int_u^1\Delta\nu(v)\,dv=-\Gamma(u).
\]
Define
\[
\Psi_\xi(v)=\int_v^1\sigma_\xi(u)N_\xi(u)\,du,
\qquad
R_\xi(v)=\frac{\Psi_\xi(v)}{1-v}.
\]
The bounds $N_\xi(u)\leq\bar\theta(1-u)$ and
$|\partial_\xi\sigma_\xi(u)|\leq\bar\theta\log(1/(1-u))$
justify differentiation and Fubini below.  Differentiating gives
\[
\partial_\xi\sigma_\xi(u)
=-\sigma_\xi(u)\int_0^u\frac{\Delta\nu(v)}{1-v}\,dv,
\qquad
\partial_\xi N_\xi(u)=-\Gamma(u).
\]
Hence Fubini's theorem implies
\begin{align*}
\frac12\mathcal A'(\xi)
&=-\int_0^1\frac{\Delta\nu(v)}{1-v}\Psi_\xi(v)\,dv
  -\int_0^1\sigma_\xi(v)\Gamma(v)\,dv\\
&=-\int_0^1R_\xi(v)\Gamma'(v)\,dv
  -\int_0^1\sigma_\xi(v)\Gamma(v)\,dv.
\end{align*}
Integrating the first term by parts, using
$\Gamma(0)=\Gamma(1)=0$ and
$0\leq R_\xi\leq\sigma_\xi N_\xi$, gives
\begin{equation}\label{allocation-derivative}
\frac12\mathcal A'(\xi)
=\int_0^1\bigl[R_\xi'(v)-\sigma_\xi(v)\bigr]\Gamma(v)\,dv.
\end{equation}
Both $\sigma_\xi$ and $N_\xi$ are nonnegative and decreasing.  Therefore
\[
\Psi_\xi(v)
\leq(1-v)\sigma_\xi(v)N_\xi(v),
\]
and
\begin{align*}
R_\xi'(v)-\sigma_\xi(v)
&=-\frac{\sigma_\xi(v)N_\xi(v)}{1-v}
 +\frac{\Psi_\xi(v)}{(1-v)^2}-\sigma_\xi(v)\\
&\leq-\sigma_\xi(v)<0.
\end{align*}
Since $\Gamma\leq0$, equation \eqref{allocation-derivative} implies
$\mathcal A'(\xi)\geq0$, strictly if the convex-order comparison is strict.
Thus both terms in \eqref{aggregate-welfare} increase.  Ex-ante
symmetry gives $\overline U[\nu]=W[\nu]/2$, proving the theorem.
\end{proof}

\subsubsection*{Bounds and magnitude}

The information ordering also identifies the extreme probabilities of
peace at a fixed mean.  The next proof obtains the bounds by comparing
posterior-value distributions in convex order.

\begin{proof}{of Corollary}{\ref{maxminsp}}
Among distributions on $[\underline\theta,\bar\theta]$ with mean $\mu$,
the least spread is the point mass at $\mu$, while the most spread places
probability
$(\bar\theta-\mu)/(\bar\theta-\underline\theta)$ at
$\underline\theta$ and probability
$(\mu-\underline\theta)/(\bar\theta-\underline\theta)$ at
$\bar\theta$.  Although these distributions are not smooth, their quantile
functions are bounded, nonnegative, and nondecreasing.  These are precisely
the properties used in the proof of Proposition \ref{mps-dcx}: they make
$\mathcal Z_\xi$ nondecreasing and $k_\xi$ decreasing.  Thus the proof
applies even though the quantiles are discontinuous; no density or interior
equilibrium characterization is used along the path.  It identifies the
endpoint distributions directly as the minimizer and maximizer of the
continuous extension of \eqref{peace1}.  For regular distributions that
extension equals the equilibrium probability of peace.

For the point mass, the quantile function is identically $\mu$, so
equation \eqref{peace1} gives
\[
q[\nu]=2\int_0^1(1-u)^{1+\mu}\,du
=\frac{2}{2+\mu}.
\]
If $\mu$ equals either support endpoint, the only admissible distribution
is a point mass, and both bounds follow from this calculation.
Otherwise $0<\pi<1$.  For the maximally spread distribution, let $\pi$ be as defined in Corollary
\ref{maxminsp}; it is the probability assigned to $\bar\theta$.  The quantile function equals
$\underline\theta$ on $[0,1-\pi)$ and $\bar\theta$ on $[1-\pi,1]$.
Equation \eqref{peace1} then gives
\begin{align*}
q[\nu]
&=2\int_0^{1-\pi}(1-u)^{1+\underline\theta}\,du
+2\pi^{\underline\theta-\bar\theta}
  \int_{1-\pi}^1(1-u)^{1+\bar\theta}\,du\\
&=\bigl(1-\pi^{2+\underline\theta}\bigr)
  \frac{2}{2+\underline\theta}
  +\pi^{2+\underline\theta}\frac{2}{2+\bar\theta},
\end{align*}
which is the stated upper bound.

For the limiting claim in the text, hold $\underline\theta$ and $\mu$ fixed.
As $\bar\theta\to\infty$,
$\pi=(\mu-\underline\theta)/(\bar\theta-\underline\theta)\to0$, and hence the
last display converges to $2/(2+\underline\theta)$.  The endpoint
distributions attain the bound for every finite $\bar\theta$ and thus
approach this limit.  If $\mu>\underline\theta$, attaining the limit itself
would require zero probability above $\underline\theta$, which is
inconsistent with the fixed mean.
\end{proof}

\subsection*{One-sided information}

We now keep one party's distribution fixed and change the other's.
The symmetric concession formula no longer determines their behavior:
we must first identify which types choose the same concession level and
whether some types concede immediately.  Lemma \ref{wild-type-uniqueness}
settles this matching when wild types have positive probability.  The next
lemma takes that probability to zero, giving the selected asymmetric
equilibrium needed for both parts of Theorem \ref{onesided}.

For a quantile function $Q$ on $[0,1]$, define the limiting matching index
and rank weight by
\begin{equation}\label{matching-index}
\Lambda_Q(w)=\int_{1/2}^w\frac{ds}{Q(s)(1-s)},
\qquad
\psi_Q(w)=Q(w)(1-w).
\end{equation}
Thus $\Lambda_Q'=1/\psi_Q$.  This is the $c\uparrow1$ counterpart of
$\Lambda_{i,c}$ in Lemma \ref{wild-type-uniqueness}.  Although each index
diverges at the top, the difference that determines the rank matching can have a
finite limit.

\begin{lemma}[Selected rank matching in the wild-type limit]\label{limit-matching}
Let $Q_1,Q_2$ be quantile functions whose distributions have continuous
positive densities on the interior of their common support $[0,1]$, and
suppose that
\begin{equation}\label{kappa-condition}
\kappa=\int_{1/2}^1
\frac{Q_1(s)-Q_2(s)}{Q_1(s)Q_2(s)(1-s)}\,ds
\end{equation}
converges absolutely.  This condition holds if $Q_1=Q_2$ near rank one, or
if both endpoint densities are continuous and positive.  Then, as
$c\uparrow1$, the rank-matching function $v_c(u)$ of Lemma
\ref{wild-type-uniqueness} converges, at every party-1 rank $u$ satisfying
$\Lambda_{Q_1}(u)+\kappa>\Lambda_{Q_2}(0^+)$, to the unique solution
$v(u)$ of
\begin{equation}\label{limit-matching-eq}
\Lambda_{Q_2}(v(u))=\Lambda_{Q_1}(u)+\kappa.
\end{equation}
Party-1 ranks with
$\Lambda_{Q_1}(u)+\kappa<\Lambda_{Q_2}(0^+)$ concede at level zero.  If
instead $v(0^+)>0$, party-2 ranks below $v(0^+)$ concede at level zero.  On
the escalating range, the limiting planned-exposure functions satisfy
\begin{equation}\label{limit-strategies}
b_1'(u)=\frac{Q_2(v(u))}{1-u},
\qquad
b_2'(v)=\frac{Q_1(u(v))}{1-v},
\end{equation}
with $b_i=0$ at the lowest escalating rank.  If $Q_1=Q_2$, then
$\kappa=0$ and rank matching is diagonal: $v(u)=u$.
\end{lemma}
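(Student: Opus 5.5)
The plan is to pass to the limit $c\uparrow1$ directly in the finite-$c$ matching equation \eqref{wild-type-matching} of Lemma \ref{wild-type-uniqueness}. First rewrite that equation as
\[
\Lambda_{2,c}(v)-\Lambda_{1,c}(u)=C_c=\int_{1/2}^1\frac{ds}{1-cs}\Bigl[\frac1{Q_2(s)}-\frac1{Q_1(s)}\Bigr]
=\int_{1/2}^1\frac{Q_1(s)-Q_2(s)}{Q_1(s)Q_2(s)(1-cs)}\,ds .
\]
Since $1-cs\geq1-s$ for $c\leq1$, the integrand is dominated by the absolutely integrable integrand of \eqref{kappa-condition}. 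It converges pointwise to that integrand, so dominated convergence gives $C_c\to\kappa$.

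Next, for fixed $u,v\in(0,1)$, we have $\Lambda_{i,c}(w)\to\Lambda_{Q_i}(w)$ by dominated (indeed bounded) convergence, uniformly on compact subsets of $(0,1)$. Here $1/(1-cs)\leq1/(1-s)$ is bounded away from the top, and $1/Q_i(s)$ is integrable near $1/2$. To verify the sufficient conditions for absolute convergence of $\kappa$: if $Q_1=Q_2$ near one the integrand vanishes there. If both endpoint densities $f_i(1)$ are positive and continuous, then $Q_i(s)=1-(1-s)/f_i(1)+o(1-s)$, so $|Q_1-Q_2|=O(1-s)$ and the integrand is bounded.

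To get convergence of the matching, fix $u$ with $\Lambda_{Q_1}(u)+\kappa>\Lambda_{Q_2}(0^+)$. Since $\Lambda_{Q_2}$ is continuous and strictly increasing from $\Lambda_{Q_2}(0^+)$ to $+\infty$, \eqref{limit-matching-eq} has a unique solution $v(u)\in(0,1)$. Take $\epsilon$ small so that $[v(u)-\epsilon,v(u)+\epsilon]\subset(0,1)$. By uniform convergence on this compact set and strict monotonicity, for $c$ near one the function $\Lambda_{2,c}-\Lambda_{1,c}(u)-C_c$ takes opposite signs at $v(u)\pm\epsilon$, so $v_c(u)$ lies in the interval. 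This gives $v_c(u)\to v(u)$.

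The immediate-concession statements follow the same way. If $\Lambda_{Q_1}(u)+\kappa<\Lambda_{Q_2}(0^+)$, the finite-$c$ right-hand side eventually lies below $\Lambda_{2,c}(0^+)$, which converges to (or stays above a level near) $\Lambda_{Q_2}(0^+)$ by monotone convergence in $c$. So $u$ is in party 1's residual zero-concession interval for all $c$ near one. The symmetric statement for party 2 when $v(0^+)>0$ follows by swapping roles. Finally, the finite-$c$ derivative formulas $b_1'(u)=cQ_2(v_c(u))/(1-cu)$ from the proof of Lemma \ref{wild-type-uniqueness} converge pointwise to \eqref{limit-strategies}, with normalization $b_i=0$ at the lowest escalating rank. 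Integrating, with dominated convergence on compacts, gives the limiting planned-exposure functions. When $Q_1=Q_2$, $\kappa=0$ and uniqueness gives $v(u)=u$.

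The main obstacle is the behaviour at the bottom of the support, where $Q_i(s)\to0$ and $\Lambda_{Q_i}(0^+)$ may be $-\infty$ or finite. This requires care with uniformity of $\Lambda_{2,c}\to\Lambda_{Q_2}$ near zero, and with the boundary case $\Lambda_{Q_1}(u)+\kappa=\Lambda_{Q_2}(0^+)$, which the statement deliberately excludes. I would handle it by monotonicity in $c$ of $1/(1-cs)$, which makes $\Lambda_{i,c}(w)$ monotone in $c$ for $w<1/2$ and $w>1/2$ separately, together with Dini-type arguments, rather than by any rate estimate.
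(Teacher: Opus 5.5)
Your proposal is correct and follows essentially the same route as the paper's proof. Dominated convergence gives $C_c\to\kappa$; pointwise convergence of $\Lambda_{i,c}$ together with strict monotonicity brackets $v_c(u)$ around $v(u)$; the residual-interval argument handles immediate concession; and dominated convergence of the derivatives yields \eqref{limit-strategies}. The bottom-of-support uniformity you flag as the main obstacle is not actually needed, because the bracketing uses convergence only at two interior ranks. The paper handles the moving lowest escalating rank simply by extending the derivatives by zero before applying dominated convergence on $[0,u_*]$.
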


\begin{proof}{of Lemma}{\ref{limit-matching}}
In the notation of Lemma \ref{wild-type-uniqueness},
\[
C_c=\Lambda_{2,c}(1)-\Lambda_{1,c}(1)
=\int_{1/2}^1
\frac{Q_1(s)-Q_2(s)}{Q_1(s)Q_2(s)(1-cs)}\,ds.
\]
Since $0<(1-cs)^{-1}\leq(1-s)^{-1}$, dominated convergence and
\eqref{kappa-condition} give $C_c\to\kappa$.  For fixed $w\in(0,1)$,
$\Lambda_{i,c}(w)\to\Lambda_{Q_i}(w)$ by bounded convergence.  Fix $u$ with
$\Lambda_{Q_1}(u)+\kappa>\Lambda_{Q_2}(0^+)$.  Since $\Lambda_{Q_2}$ is
continuous, strictly increasing, and unbounded above,
\eqref{limit-matching-eq} has a unique solution $v(u)\in(0,1)$.

Choose $v_-<v(u)<v_+$ in $(0,1)$.  Strict monotonicity gives
\[
\Lambda_{Q_2}(v_-)<\Lambda_{Q_1}(u)+\kappa
<\Lambda_{Q_2}(v_+).
\]
The same inequalities hold with $\Lambda_{i,c}$ and $C_c$ for $c$ close to
one.  Equation \eqref{wild-type-matching} and monotonicity then place
$v_c(u)$ in $(v_-,v_+)$.  Since the bracket is arbitrary,
$v_c(u)\to v(u)$.  Ranks for which the limiting right-hand side lies below
$\Lambda_{Q_2}(0^+)$ have no match for all sufficiently large $c$ and form
the residual immediate-concession interval; the party-2 case is symmetric.
Equation \eqref{limit-strategies} is the pointwise limit of the
planned-exposure derivatives in the proof of Lemma
\ref{wild-type-uniqueness}.  Extend those derivatives by zero over any
immediate-concession interval.  They are bounded by $(1-u)^{-1}$ on
every fixed interval $[0,u_*]$ with $u_*<1$, so dominated convergence
also gives convergence of the integrated strategies, including their
zero lower boundary values.

If $Q_1=Q_2$, then $\kappa=0$ and rank matching is diagonal.  Finally, if
$Q_1=Q_2$ near one, the integrand in \eqref{kappa-condition} vanishes
there.  If both endpoint densities are continuous and positive, then
$|Q_1(s)-Q_2(s)|\leq C(1-s)$ near one, which also gives absolute
convergence.
\end{proof}

The matching identifies when either party plans to concede.  To compare
information structures, we now translate those plans into a probability
of peace.  It is convenient to calculate the complementary probability
that disaster arrives while both parties are still escalating; types that
concede immediately contribute nothing to this probability.

\begin{lemma}[Disaster probability under selected rank matching]
\label{disaster-formula}
In the equilibrium of Lemma \ref{limit-matching}, let $u_0$ be party 1's
lowest escalating rank and $v_0=v(u_0^+)$, so that $u_0v_0=0$, and define
\[
\mathcal Z(u)=\int_{u_0}^u\frac{Q_2(v(s))}{1-s}\,ds.
\]
Then the probability of peace $q$ satisfies
\begin{equation}\label{disaster-eq}
1-q=\int_{u_0}^1\psi_{Q_2}(v(u))e^{-\mathcal Z(u)}\,du
=\int_{u_0}^1(1-v(u))Q_2(v(u))e^{-\mathcal Z(u)}\,du.
\end{equation}
In particular, at a symmetric pair $Q_1=Q_2=\nu$,
\begin{equation}\label{symmetric-disaster-formula}
1-q[\nu]=\int_0^1(1-u)\nu(u)\sigma_\nu(u)\,du.
\end{equation}
\end{lemma}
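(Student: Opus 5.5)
Parametrize outcomes by party 1's rank $u$ and compute the density of the event "disaster arrives while both parties are still escalating." Disaster occurs before peace only if both parties are escalating, i.e.\ party 1 has rank $u>u_0$ and party 2 has rank $v>v_0$. Ranks of one party that concede at level zero contribute nothing: immediate concession ends the confrontation at hazard $0$, and $e^{0}=1$ gives peace with certainty. On the escalating range both planned-exposure functions $b_1$ and $b_2$ of Lemma \ref{limit-matching} are strictly increasing, start at zero, and are matched through $v(\cdot)$, so $b_2(v(u))=b_1(u)$. Writing $\mathcal Z=b_1$, the first concession occurs at exposure $\mathcal Z(\min\{u,u(v)\})$, where $u(\cdot)$ is the inverse matching. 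Hence
\[
1-q=\int_{u_0}^1\int_{v_0}^1\Bigl(1-e^{-\mathcal Z(\min\{u,u(v)\})}\Bigr)\,dv\,du .
\]

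The step I expect to matter most is converting this double integral into the single integral in \eqref{disaster-eq}. I would write $1-e^{-\mathcal Z(w)}=\int_{u_0}^w \mathcal Z'(s)e^{-\mathcal Z(s)}\,ds$ and apply Fubini. A given level $s$ is reached by the pair exactly when $u>s$ and $v>v(s)$, so the probability that both are still escalating at party-1 rank $s$ is $(1-s)(1-v(s))$. Multiplying by $\mathcal Z'(s)=Q_2(v(s))/(1-s)$ from \eqref{limit-strategies}, the factor $1-s$ cancels and leaves
\[
(1-v(s))\,Q_2(v(s))\,e^{-\mathcal Z(s)}=\psi_{Q_2}(v(s))e^{-\mathcal Z(s)} .
\]
This is exactly the integrand in \eqref{disaster-eq}. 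Equivalently, the disaster hazard is $d\mathcal Z$ and the joint survival of both escalating sets is $(1-s)(1-v(s))$. All integrands are nonnegative, so Tonelli justifies the interchange without any integrability check.

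Two points need separate handling. First, the probability that neither party ever concedes is zero, since the set $\{u=1\}$ is null; the upper limit $1$ therefore needs no care, and if $\mathcal Z$ diverges at $1$ the factor $e^{-\mathcal Z}$ only helps. Second, the boundary normalization $u_0v_0=0$ guarantees that exposure starts at $0$ at the lowest escalating ranks: the party without an immediate-concession interval begins escalating at rank $0$.

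For the symmetric specialization, Lemma \ref{limit-matching} gives $v(u)=u$ and $u_0=0$. Then $\mathcal Z=\mathcal Z_\nu$ and $e^{-\mathcal Z}=\sigma_\nu$, which yields \eqref{symmetric-disaster-formula}. As a consistency check, I would integrate by parts in \eqref{peace1}, using $\sigma_\nu'=-\sigma_\nu\,\nu/(1-u)$:
\[
1-q[\nu]=\int_0^1 2(1-u)\bigl(1-\sigma_\nu(u)\bigr)\,du .
\]
The boundary terms vanish, so this reproduces the same expression.
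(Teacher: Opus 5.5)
Your proof is correct and follows the paper's argument: the paper likewise integrates the disaster density $e^{-b}$ on the hazard clock against the probability $(1-u)(1-v(u))$ that both parties are still escalating, then changes variables from $b$ to $u$ so that $\mathcal Z'(u)=Q_2(v(u))/(1-u)$ cancels the factor $1-u$. Your double integral with Tonelli is a more explicit justification of that same step, and it spares you the paper's check that $\mathcal Z$ maps $[u_0,1)$ onto $[0,\infty)$.
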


\begin{proof}{of Lemma}{\ref{disaster-formula}}
On the disaster-hazard clock \eqref{hazard-clock}, the transformed disaster threshold has density
$e^{-b}$.  Disaster occurs only if it arrives while both parties are still
escalating.  By \eqref{limit-strategies}, $\mathcal Z(u)=b_1(u)$ is the
hazard level at which rank $u$ of party 1 and rank $v(u)$ of party 2 concede,
and both parties are still present with probability $(1-u)(1-v(u))$.
The map $u\mapsto\mathcal Z(u)$ is an increasing bijection from
$[u_0,1)$ onto $[0,\infty)$: rank matching implies $v(u)\to1$, so
$Q_2(v(u))\to1$ and the integral defining $\mathcal Z$ diverges at
the top.  Changing variables from $b$ to $u$ gives
\[
1-q=\int_{u_0}^1(1-u)(1-v(u))e^{-\mathcal Z(u)}
\mathcal Z'(u)\,du
=\int_{u_0}^1(1-v(u))Q_2(v(u))e^{-\mathcal Z(u)}\,du.
\]
Types that concede at level zero never face disaster and contribute nothing.
At a symmetric pair, $v(u)=u$, $\mathcal Z=\mathcal Z_\nu$, and
$e^{-\mathcal Z}=\sigma_\nu$.
\end{proof}

\subsubsection*{Local robustness at symmetry}

We apply the matching and disaster formulas first to the small one-sided
change in part~(i).  The main difficulty is differentiability: both
strategies adjust, and an interval of types may start conceding
immediately.  Working in type coordinates allows the proof to control
these changes under the two bounds in part~(i) and then identify the sign
of the first-order effect.

\begin{proof}{of part (i) of Theorem}{\ref{onesided}}
Write $S=1-F$, $S_\eta=1-F_\eta$, and
$\beta_F(x)=\int_0^x tf(t)/S(t)\,dt$, the symmetric planned exposure.
Define
\[
D(x)=\int_0^x\delta(z)\,dz,\qquad M(x)=\int_0^xD(y)\,dy.
\]
Since $f_\eta$ is a density with the same mean as $f$ for small positive
$\eta$, the mean-preserving-spread hypothesis implies
$\int_0^1\delta(x)\,dx=\int_0^1x\delta(x)\,dx=0$.
Hence $F_\eta=F+\eta D$ and $D(0)=D(1)=M(0)=M(1)=0$.
Retain the constants $K,\alpha$ from the theorem.  Nontriviality of the
spread implies $K>0$.
Fix $0<\eta_0<1/(2K)$ and consider
$|\eta|\leq\eta_0$.  Constants below may depend on this interval and on
$f,\delta$.  A dot denotes differentiation with respect to $\eta$,
and a prime denotes differentiation with respect to the displayed scalar
argument.  The notation $h_\eta\asymp h$ means that each function is bounded above
by a fixed positive multiple of the other, uniformly in $\eta$ and the
argument.

\medskip\noindent
\textbf{1. Feasibility and selected matching in type coordinates.}
The mass constraint gives
\begin{equation}\label{os-domination}
|D(x)|\leq K\min\{F(x),S(x)\},\qquad
f_\eta\asymp f,\qquad S_\eta:=S-\eta D\asymp S.
\end{equation}
In particular, the distributions are feasible for both signs of $\eta$.
Define, on $(0,1)$,
\begin{equation}\label{os-type-indices}
L_F'(x)=\frac{f(x)}{xS(x)},\qquad
R_\eta(x)=\log\frac{S(x)}{S_\eta(x)},\qquad
J_\eta(x)=\frac{R_\eta(x)}x-
\int_x^1\frac{R_\eta(t)}{t^2}\,dt.
\end{equation}
Domination gives $|\eta D/S|\leq\eta_0K<1/2$, so
$|R_\eta(x)|\leq2K|\eta|$ and $R_\eta(x)=O(|\eta|x^{\alpha+1})$
near zero.  The definition of $J_\eta$ in \eqref{os-type-indices} therefore
implies $|xJ_\eta(x)|\leq C|\eta|$.  Near zero, $J_\eta$ is
$O(|\eta|)$ if $\alpha>0$, but only
$O(|\eta|(1+|\log x|))$ if $\alpha=0$; a uniform bound on $J_\eta$
over the whole support is not needed.

The additive constant in $L_F$ is arbitrary when $\alpha=0$; when
$\alpha>0$, set $L_F(0)=0$.  Direct differentiation gives
\begin{equation}\label{os-index-derivative}
(L_F+J_\eta)'(x)=\frac{f_\eta(x)}{xS_\eta(x)}>0.
\end{equation}
The selected type of party 2 matched to type $x$ of party 1 is
\begin{equation}\label{os-type-matching}
y_\eta(x)=L_F^{-1}\bigl(L_F(x)+J_\eta(x)\bigr),
\end{equation}
where, if $\alpha>0$, the inverse is extended by zero on nonpositive
arguments.

To verify both applicability and normalization in Lemma
\ref{limit-matching}, write $\nu_\eta=F_\eta^{-1}$.  For any fixed upper-rank interval
$[u_*,1)$ with $u_*>0$, its lower type stays bounded away from zero and
\[
\int_{u_*}^1
\frac{|\partial_\eta(1/\nu_\eta(u))|}{1-u}\,du
=\int_{\nu_\eta(u_*)}^1
\frac{|D(x)|}{x^2S_\eta(x)}\,dx\leq C.
\]
Here $\partial_\eta\nu_\eta(u)=-D(\nu_\eta(u))/f_\eta(\nu_\eta(u))$
follows by differentiating $F_\eta(\nu_\eta(u))=u$.
Integration in $\eta$ proves the lemma's absolute tail-integrability
condition.  At equal ranks let $x=\nu_\eta(u)$ and $z=\nu_0(u)$.
Then $S(z)=S_\eta(x)$, so, as $u\uparrow1$,
\[
L_F(x)-L_F(z)=-R_\eta(x)+o(1),\qquad
J_\eta(x)=R_\eta(x)+o(1).
\]
Indeed, in the first relation replace $1/t$ by $1$ in
$\int_z^x f(t)/(tS(t))\,dt$; the error is bounded by
\[
\sup_{t\text{ between }x,z}|1/t-1|\,|R_\eta(x)|.
\]
The second follows directly from boundedness of $R_\eta$.
Thus the indices in \eqref{os-type-matching} agree at the top at equal ranks,
which is precisely the normalization selected by the common wild type.

Set $a_\eta=f_\eta/S_\eta$, the type-distribution hazard rate, and
$w_F(x)=xS(x)$, the weight appearing in the disaster probability.  The selected exposure and
disaster probability from Lemma \ref{disaster-formula} are therefore
\begin{equation}\label{os-disaster}
\begin{aligned}
b_\eta(x)&=\int_0^x y_\eta(t)a_\eta(t)\,dt,\\
I(\eta):=1-q(\eta)&=\int_0^1
w_F(y_\eta(x))f_\eta(x)e^{-b_\eta(x)}\,dx.
\end{aligned}
\end{equation}
This formula includes either possible immediate-concession mass: the
integrand vanishes on party 1's immediate-concession interval, and party
2's mass is encoded by the lower limit of its active matched types.

\medskip\noindent
\textbf{2. Endpoint control and differentiation.}
From \eqref{os-domination},
\[
\dot R_\eta(x)=\frac{D(x)}{S_\eta(x)},\qquad
\sup_{\eta,x}|x\dot J_\eta(x)|\leq C.
\]
Near zero, $\dot R_\eta(x)=O(x^{\alpha+1})$, and hence, uniformly in
$\eta$,
\begin{equation}\label{os-lower-bounds}
\dot J_\eta(x)=
\begin{cases}
O(1+|\log x|),&\alpha=0,\\
O(1),&\alpha>0.
\end{cases}
\qquad x\dot J_\eta(x)\longrightarrow0.
\end{equation}
There is no restriction on the rate at which $f$ may vanish at one.

Call the interval on which the matched types choose positive concession
levels the \emph{active matching interval}.  On this interval, differentiation of \eqref{os-type-matching}
gives
\begin{equation}\label{os-matching-derivatives}
\partial_x y_\eta(x)=\frac{f_\eta(x)}{xS_\eta(x)}
\frac{y_\eta(x)S(y_\eta(x))}{f(y_\eta(x))},\qquad
\dot y_\eta(x)=\frac{y_\eta(x)S(y_\eta(x))}{f(y_\eta(x))}
\dot J_\eta(x).
\end{equation}
The reciprocal $1/f(y)$ need not be bounded.  Instead, let $x_\eta(y)$ be
the inverse of $y_\eta$ on the active matching interval.  For $y$ below
the smallest active party-2 type, define $x_\eta(y)=0$.  Changing
variables using the first identity in \eqref{os-matching-derivatives}
cancels that inverse density.  In particular,
\begin{equation}\label{os-hazard-derivative}
\begin{aligned}
\dot b_\eta(x)&=
\int_0^{y_\eta(x)}x_\eta(y)\dot J_\eta(x_\eta(y))\,dy
+\int_0^x y_\eta(t)\dot a_\eta(t)\,dt,\\
|\dot b_\eta(x)|&\leq C(1+b_\eta(x)),
\end{aligned}
\end{equation}
where the first integrand is defined to be zero when $x_\eta(y)=0$.
Here we used
\[
\frac{\dot a_\eta}{a_\eta}
=\frac{\delta}{f_\eta}+\frac{D}{S_\eta},
\qquad |\dot a_\eta|\leq Ca_\eta.
\]

The first integral in the formula for $\dot b_\eta$ is bounded by $C$
because $|x\dot J_\eta(x)|\leq C$ and the range of integration has length
at most one.  For the second integral, retain its matching weight:
\[
\left|\int_0^x y_\eta(t)\dot a_\eta(t)\,dt\right|
\leq C\int_0^x y_\eta(t)a_\eta(t)\,dt=Cb_\eta(x).
\]
Thus no estimate for the unweighted integral $\int_0^x a_\eta(t)\,dt$
is required.

For completeness, moving immediate-concession boundaries do not prevent
absolute continuity in $\eta$.  If $\alpha=0$, $L_F(0+)=-\infty$ and there
is no boundary crossing.  If $\alpha>0$, $L_F(x)\asymp x^\alpha$ near zero
and $(L_F^{-1})'(s)=O(s^{1/\alpha-1})$.  For each fixed $x>0$, the argument
$L_F(x)+J_\eta(x)$ is real analytic in $\eta$ on $|\eta|<1/K$ (expand
$-\log(1-\eta D/S)$ in its convergent power series), and is
positive at zero.  Its zeros on a compact parameter interval are finite
in number and of finite order.  Near such a zero $\eta_*$ of order $m\geq1$, the derivative
of the zero-extended inverse is $O(|\eta-\eta_*|^{m/\alpha-1})$, which is
integrable.  Thus $y_\eta(x)$ is absolutely continuous in $\eta$.
The change of variables above also gives
$\int_0^1|\dot y_\eta(x)|a_\eta(x)\,dx\leq C$.
Together with the bound on $\dot a_\eta$, this justifies
\eqref{os-hazard-derivative} by Fubini on every fixed type interval
$[0,x]$ with $x<1$.

Differentiating \eqref{os-disaster} now yields, initially for almost every
$\eta$, $I'(\eta)=T_\eta+U_\eta-V_\eta$, where
\begin{align*}
T_\eta&=\int_0^1 e^{-b_\eta(x_\eta(y))}
x_\eta(y)S_\eta(x_\eta(y))w_F'(y)
\dot J_\eta(x_\eta(y))\,dy,\\
U_\eta&=\int_0^1 e^{-b_\eta(x)}w_F(y_\eta(x))\delta(x)\,dx,\\
V_\eta&=\int_0^1 e^{-b_\eta(x)}w_F(y_\eta(x))
f_\eta(x)\dot b_\eta(x)\,dx.
\end{align*}
Again the first integrand is zero when $x_\eta(y)=0$.
Their absolute integrands are bounded, respectively, by
\[
C|w_F'(y)|,\qquad Kf(x),\qquad
Cf(x)(1+b_\eta(x))e^{-b_\eta(x)}\leq C'f(x).
\]
All these bounds are integrable, since $w_F'=S-xf$ is continuous.
Applied also to the absolute derivatives before the change of variables,
they prove absolute continuity of $I$ by Fubini.

For each interior type, $y_\eta(x)\to x$ and $x_\eta(y)\to y$.
Moreover, $S_\eta/S\to1$ uniformly and
$\dot R_\eta=D/S_\eta\to D/S$ uniformly.  The integral formula for
$\dot J_\eta$ then gives convergence to $\dot J_0$ uniformly on compact
subsets of $(0,1)$.  Hence
$\dot J_\eta(x_\eta(y))\to\dot J_0(y)$ for each interior $y$.
On every fixed interval $[0,x]$ with $x<1$, the exposure integrands are
dominated by $Cf/S$, so $b_\eta\to\beta_F$ uniformly there.  In particular,
$b_\eta(x_\eta(y))\to\beta_F(y)$ for each interior $y$.  The first integral in
\eqref{os-hazard-derivative} has a uniformly bounded integrand; the second
is dominated on $[0,x]$ by a constant times $f/S$.  Thus
$\dot b_\eta(x)\to\dot b_0(x)$.  Dominated convergence in $T_\eta$,
$U_\eta$, and $V_\eta$ proves that their sum is continuous at zero.
To see why this gives a two-sided derivative, absolute continuity and
the almost-everywhere derivative formula imply
\[
\frac{I(\eta)-I(0)}{\eta}
=\frac1\eta\int_0^\eta(T_s+U_s-V_s)\,ds
\longrightarrow T_0+U_0-V_0.
\]
The convergence follows from continuity of the integrand at zero and
holds for either sign of $\eta$.

The size of any concession mass can also be read off directly.  If
$\alpha>0$, $J_\eta(0)=O(|\eta|)$ and
$(L_F+J_\eta)'(x)\asymp x^{\alpha-1}$ near zero.  Either cutoff type is
$O(|\eta|^{1/\alpha})$, and its probability mass is
$O(|\eta|^{1+1/\alpha})=o(|\eta|)$.  If $\alpha=0$, there is no such
mass.  Thus immediate concessions do not contribute a separate first-order
boundary term.

\medskip\noindent
\textbf{3. The derivative and its sign.}
To evaluate the first variation, abbreviate
\[
\begin{aligned}
d(x)&=\frac{D(x)}{S(x)},&
H_D(x)&=\int_x^1\frac{d(t)}{t^2}\,dt,\\
B_D(x)&=\int_0^x tH_D(t)\,dt,&
\Theta_F(x)&=\int_x^1 e^{-\beta_F(t)}S(t)f(t)\,dt.
\end{aligned}
\]
At symmetry,
\[
\dot J_0(x)=\frac{d(x)}x-H_D(x),\qquad
\dot b_0(x)=xd(x)-B_D(x).
\]
For the second identity, use $\dot a_0=d'$ in
$\dot b_0=\int_0^x[t\dot J_0(t)+td'(t)]\,dt$ and integrate by parts.
Substituting these identities in the first variation gives
\[
\begin{aligned}
I'(0)=\int_0^1e^{-\beta_F(x)}\bigl[&xS(x)\delta(x)
+xS(x)(S(x)-xf(x))\dot J_0(x)\\
&-xS(x)f(x)\dot b_0(x)\bigr]dx.
\end{aligned}
\]
Integration by parts in the term containing $\delta=D'$ cancels all
terms containing $d$ directly.  The remaining expression is
\[
I'(0)=\int_0^1e^{-\beta_F(x)}
\left[-xS(x)(S(x)-xf(x))H_D(x)+xS(x)f(x)B_D(x)\right]dx.
\]
Since $(S^2e^{-\beta_F})'=-(2+x)Sfe^{-\beta_F}$,
\[
\int_x^1 tS(t)f(t)e^{-\beta_F(t)}\,dt
=S(x)^2e^{-\beta_F(x)}-2\Theta_F(x).
\]
Exchanging the order in the term containing $B_D$ therefore gives
\[
I'(0)=\int_0^1 xH_D(x)
\left[xS(x)f(x)e^{-\beta_F(x)}-2\Theta_F(x)\right]dx
=-\int_0^1 d(x)\Theta_F(x)\,dx.
\]
The last equality integrates $-H_D(x)(x^2\Theta_F(x))'$ by parts, using
$H_D'=-d/x^2$.  These manipulations are absolutely convergent: $d$ is
bounded and is $O(x^{\alpha+1})$ at zero, while $H_D$ is bounded there if
$\alpha>0$ and is $O(1+|\log x|)$ if $\alpha=0$.

Consequently,
\begin{equation}\label{first-variation-final}
q'(0)=\int_0^1\frac{D(x)\Theta_F(x)}{S(x)}\,dx.
\end{equation}
Define $A_F(x)=\Theta_F(x)/S(x)$.  Then $0\leq A_F\leq S/2$ and
\[
-A_F'(x)=\frac{f(x)}{S(x)^2}
\left[S(x)^2e^{-\beta_F(x)}-\Theta_F(x)\right]
=: \mathcal K_F(x),
\]
The bracket has derivative
\[
-(1+x)f(x)S(x)e^{-\beta_F(x)}
\]
and limit zero at one.  It therefore equals
$\int_x^1(1+t)f(t)S(t)e^{-\beta_F(t)}\,dt$, giving the integral
representation of $\mathcal K_F$.
Since $\Theta_F(x)\leq e^{-\beta_F(x)}S(x)^2/2$,
\[
\tfrac12f(x)e^{-\beta_F(x)}\leq\mathcal K_F(x)
\leq f(x)e^{-\beta_F(x)}\leq f(x)
\qquad(0<x<1).
\]
The moment constraints imply $M(0)=M(1)=0$, so integration by parts in
$D=M'$ gives
\begin{equation}\label{onesided-kernel}
q'(0)=\int_0^1\mathcal K_F(x)M(x)\,dx,\qquad
\mathcal K_F(x)=\frac{f(x)}{S(x)^2}
\int_x^1(1+t)f(t)S(t)e^{-\beta_F(t)}\,dt.
\end{equation}
The mean-preserving-spread hypothesis gives $M\geq0$ and $M\not\equiv0$.
Consequently,
\begin{equation}\label{onesided-derivative-bound}
q'(0)\geq\frac12\int_0^1f(x)e^{-\beta_F(x)}M(x)\,dx>0.
\end{equation}
The expansion $q(\eta)=q(0)+\eta q'(0)+o(\eta)$ proves part~(i) and
also the reverse-direction conclusion in Remark \ref{rem:onesided-reverse}.

\end{proof}

The local calculation also gives the half-effect identity.  We compare
its derivative with the derivative obtained when both parties'
distributions change along the same path.

\begin{proof}{of Corollary}{\ref{onesided-half}}
Retain $D$, $S_\eta$, $\beta_F$, and
$\Theta_F(x)=\int_x^1 e^{-\beta_F(t)}S(t)f(t)\,dt$ from the local proof,
and write $\nu_\eta=F_\eta^{-1}$ for the quantile function.
The symmetric comparison needs no uniform pointwise bound on the
quantile derivative.  The following weighted integral suffices:
\[
\int_0^1\frac{|\partial_\eta\nu_\eta(u)|}{1-u}\,du
=\int_0^1\frac{|D(x)|}{S_\eta(x)}\,dx\leq C.
\]
Near zero, the lower-endpoint bounds also give
$|\partial_\eta\nu_\eta(u)|\leq Cu^{1/(\alpha+1)}$ uniformly in $\eta$;
on compact interior rank intervals the quantile derivatives are bounded
and continuous in $\eta$.  Differentiation under the inner integral is
therefore valid for each $u<1$.  The weighted bound above then bounds
that derivative uniformly in $u$, justifying differentiation of the
symmetric peace formula
\[
q[F_\eta,F_\eta]=2\int_0^1(1-u)
\exp\left\{-\int_0^u\frac{\nu_\eta(v)}{1-v}\,dv\right\}du.
\]
Using $\partial_\eta\nu_\eta|_0=-D(\nu_0)/f(\nu_0)$ and exchanging
integrals gives
\[
\left.\frac{d}{d\eta}q[F_\eta,F_\eta]\right|_0
=2\int_0^1\frac{D(x)\Theta_F(x)}{S(x)}\,dx=2q'(0).
\]
The last equality is \eqref{first-variation-final}, proving
\eqref{onesided-half-identity}.
\end{proof}

\subsubsection*{Reversal away from symmetry}

The local result does not determine the sign of an information change at
an asymmetric starting point.  To prove part~(ii), we first construct a
one-sided information improvement that lowers peace between two
endpoints.  We then use the garbling path from the main text to locate an
asymmetric point at which arbitrarily small improvements lower peace.

\begin{proof}{of part (ii) of Theorem}{\ref{onesided}}
Let party 2's posterior value be uniform on $[0,1]$.  For an integer $n>1$,
let party 1's posterior-value quantile be
\[
Q_n(u)=\frac{u^{1/n}}{u^{1/n}+(1-u)^{1/n}}.
\]
It is the quantile function of
\[
F_n(x)=\frac{x^n}{x^n+(1-x)^n}.
\]
The symmetry $Q_n(1-u)=1-Q_n(u)$ gives mean $1/2$, and
$Q_n(u)>u$ below $1/2$ and $Q_n(u)<u$ above $1/2$.  Hence $F_n$ is a
mean-preserving contraction of the uniform distribution.  By the martingale
characterization of posterior means, party 1 can first observe a signal whose
posterior-value distribution is $F_n$ and then learn its uniform benefit
fully.  This is a Blackwell improvement.

Let $q_n^0$ denote the selected equilibrium probability of peace before full
revelation.  To bound its complement, define
\[
\Lambda_{Q_n}(u)=\int_{1/2}^u\frac{ds}{Q_n(s)(1-s)}.
\]
Before applying Lemma \ref{limit-matching}, verify its tail condition.
The identity $Q_n(s)^{-1}=1+((1-s)/s)^{1/n}$ gives
\[
\left|\frac1s-\frac1{Q_n(s)}\right|\frac1{1-s}
\leq\frac1s+\frac{(1-s)^{1/n-1}}{s^{1/n}}.
\]
The right-hand side is integrable on $[1/2,1]$, since $1/n-1>-1$.
Thus \eqref{kappa-condition} holds even though $f_n$ vanishes at both
endpoints.  By Lemma \ref{limit-matching}, with $Q_1=Q_n$ and $Q_2(v)=v$, the selected
rank-matching function satisfies
\[
\log\frac{v_n(u)}{1-v_n(u)}=\Lambda_{Q_n}(u)+\kappa_n,
\]
where
\[
\begin{aligned}
\kappa_n
&=\int_{1/2}^1
\left[\frac1s-\frac1{Q_n(s)}\right]\frac{ds}{1-s}\\
&=\log2-\int_{1/2}^1
(1-s)^{1/n-1}s^{-1/n}\,ds
=\log2-n\int_0^1\frac{dr}{1+r^n}
\leq\log2-\frac n2.
\end{aligned}
\]
Here we used
$(1-Q_n(s))/Q_n(s)=((1-s)/s)^{1/n}$ and the substitution
$r=((1-s)/s)^{1/n}$.  Thus
\[
v_n(u)=\frac{\exp[\Lambda_{Q_n}(u)+\kappa_n]}
{1+\exp[\Lambda_{Q_n}(u)+\kappa_n]}.
\]
Because $\Lambda_{Q_n}(0)$ is finite for $n>1$, $v_n(0^+)>0$: party 2 has an
immediate-concession atom consisting of ranks below $v_n(0^+)$.  The
rank-matching function and the disaster formula below already account for this atom.
For $u\leq1/2$, $\Lambda_{Q_n}(u)\leq0$.  For
$u\in[1/2,4/5]$, $Q_n(u)\geq1/2$, and therefore
\[
\Lambda_{Q_n}(u)\leq2\log\frac52.
\]
Consequently, for every $u\leq4/5$,
\[
v_n(u)\leq e^{\Lambda_{Q_n}(u)+\kappa_n}
\leq\frac{25}{2}e^{-n/2}.
\]

Let
\[
\mathcal Z_n(u)=\int_0^u\frac{v_n(s)}{1-s}\,ds.
\]
Lemma \ref{disaster-formula}, specialized to $Q_2(v)=v$, gives the
probability of disaster as
\[
1-q_n^0=\int_0^1e^{-\mathcal Z_n(u)}v_n(u)[1-v_n(u)]\,du.
\]
Using $e^{-\mathcal Z_n}\leq1$, the preceding bound on $[0,4/5]$, and
$v_n(1-v_n)\leq1/4$ on the remaining interval gives
\[
1-q_n^0
\leq10e^{-n/2}+\frac1{20}.
\]
For $n=20$ this implies
\[
q_{20}^0\geq\frac{19}{20}-10e^{-10}>0.94.
\]

After full revelation both parties are uniform.  Let $q^1$ denote the
resulting probability of peace.  Direct evaluation of the symmetric peace
functional gives
\[
q^1=2\int_0^1e^u(1-u)^2\,du=4e-10<0.88.
\]
Thus $q^1<q_{20}^0$.  To obtain an adverse \emph{local} effect at an
asymmetric pair, let $q_t$ be the selected probability along
\eqref{onesided-family}; its endpoints satisfy $q_1>q_0$.
We show that $q_t$ is continuous on $[0,1]$ and differentiable on
$(0,1)$, and then apply the mean value theorem.

Write $Q_t=F_t^{-1}$ and $L_t(u)=\log[v_t(u)/(1-v_t(u))]$.
On any compact parameter interval contained in $[0,1)$, $f_t$ is bounded
above and away from zero.  Implicit differentiation of $F_t(Q_t(u))=u$
therefore gives $Q_t(u)\geq c_0u$ and
$|\partial_tQ_t(u)|\leq C\min\{u,1-u\}$.  Indeed,
\[
\partial_tQ_t(u)
=-\frac{F_{20}(Q_t(u))-Q_t(u)}{f_t(Q_t(u))}.
\]
Here $F_{20}$ denotes the distribution at the endpoint of the path:
\[
F_{20}(x)=\frac{x^{20}}{x^{20}+(1-x)^{20}}.
\]
The numerator is bounded by $C\min\{Q_t(u),1-Q_t(u)\}$ because it
vanishes at both endpoints and has bounded derivative; the density bounds
make $Q_t(u)$ and $1-Q_t(u)$ comparable to $u$ and $1-u$, respectively.  Differentiating the matching formula yields
\[
\partial_tL_t(u)
=\int_u^1\frac{\partial_tQ_t(s)}{Q_t(s)^2(1-s)}\,ds.
\]
This derivative is $O(1+|\log u|)$ near zero and $O(1-u)$ near one.
Since $\partial_tv_t=v_t(1-v_t)\partial_tL_t$, these bounds justify
differentiation of $\mathcal Z_t(u)=\int_0^uv_t(s)/(1-s)\,ds$ and
of the disaster integral \eqref{disaster-eq} by dominated convergence.
They also give continuity at $t=0$.
At $t=1$, use $Q_{20}(s)\leq Q_t(s)\leq s$ for $s\geq1/2$:
the absolute integrand defining the matching constant is bounded by the
integrable endpoint integrand already checked above.  Thus the matching
constants, matched ranks, and disaster integrals converge as $t\uparrow1$,
again by dominated convergence.

Choose $t_b<1$ with $q_{t_b}>q_0$.  The mean value theorem gives some
$t_*\in(0,t_b)$ with
$\left.dq_t/dt\right|_{t=t_*}>0$.  Consequently
$q_{t_*-h}<q_{t_*}$ for every sufficiently small $h>0$.
Decreasing $t$ is a Blackwell improvement along this family, while
$t_*>0$ means that the starting pair is asymmetric.  This proves part~(ii)
without asserting that a sequence of improvements starting from symmetry
ever brings peace below its initial symmetric level.
\end{proof}

\end{document}